\definecolor{darkgreen}{rgb}{0.0, 0.5, 0.0}
\definecolor{darkred}{rgb}{0.5, 0.0, 0.0}
\definecolor{darkblue}{rgb}{0.0, 0.0, 0.5}
\newcommand{\ie}{{\emph{i.e.}},\xspace}
\newcommand{\eg}{{\emph{e.g.}}\xspace}
\newtheorem{lemma}{Lemma}
\newcommand{\ourmodel}{\textsc{NS-Pep}}
\title{\textsc{NS-Pep}: \textit{De novo} Peptide Design with Non-Standard Amino Acids}
\author{%
  Tao Guo\\
  Center of Excellence for Smart Health\\
  KAUST\\
  \texttt{tao.guo@kaust.edu.sa} \\
  \And
  Junbo Yin\\
  Center of Excellence for Smart Health\\
  KAUST\\
  \texttt{junbo.yin@kaust.edu.sa}\\
  \And
  Yu Wang\\
  Syneron Technology\\
  \texttt{yu.wang@synerontech.com}\\
  \And
  Xin Gao\thanks{Computer Science Program, Computer, Electrical and Mathematical Sciences and Engineering Division, King Abdullah University of Science and Technology (KAUST), Thuwal 23955-6900, Kingdom of Saudi Arabia. Center of Excellence on Generative AI, King Abdullah University of Science and Technology (KAUST), Thuwal 23955-6900, Kingdom of Saudi Arabia.}\\
  Center of Excellence for Smart Health\\
  KAUST\\
  \texttt{xin.gao@kaust.edu.sa}\\  
}
\begin{document}

\maketitle

\begin{abstract}
Peptide drugs incorporating non-standard amino acids (NSAAs) offer enhanced binding affinity and improved pharmacological properties. However, existing peptide design methods are limited to standard amino acids, leaving NSAA-aware design largely unexplored.
We introduce \textbf{\ourmodel}, a unified framework for co-designing peptide sequences and structures with NSAAs. The main challenge is that NSAAs are extremely underrepresented—even the most frequent one, SEP, accounts for less than 0.4\% of residues—resulting in a severe long-tailed distribution.
To improve generalization to rare amino acids, we propose \textbf{Residue Frequency-Guided Modification (RFGM)}, which mitigates over-penalization through frequency-aware logit calibration, supported by both theoretical and empirical analysis.
Furthermore, we identify that insufficient side-chain modeling limits geometric representation of NSAAs. To address this, we introduce \textbf{Progressive Side-chain Perception (PSP)} for coarse-to-fine torsion and location prediction, and \textbf{Interaction-Aware Weighting (IAW)} to emphasize pocket-proximal residues.
Moreover, \ourmodel~generalizes naturally to the peptide folding task with NSAAs, addressing a major limitation of current tools. Experiments show that \ourmodel~improves sequence recovery rate and binding affinity by 6.23\% and 5.12\%, respectively, and outperforms AlphaFold3 by 17.76\% in peptide folding success rate. The code will be released upon paper acceptance.
\end{abstract}

\section{Introduction} \label{Sec.Introduction}
Peptides, typically defined as short single-chain proteins \cite{pep_length}, play vital roles in various biological processes \cite{pep1} and have become increasingly important in drug development \cite{pep-drug,pep-drug2}. To enhance binding affinity and functional specificity, pharmaceutical chemists have expanded beyond standard amino acids by incorporating \textbf{N}on-\textbf{S}tandard \textbf{A}mino \textbf{A}cids (NSAAs) into peptide therapeutics \cite{NSAA1, NSAA2}. Traditionally, designing peptides with NSAAs has relied heavily on rational design guided by expert knowledge, which is time-consuming and not tractable to high-throughput workflows. Consequently, recent trend is to adopt AI-driven methods to accelerate and scale up the peptide design process.

Deep generative models \cite{DPM, DDPM, SGM, CFM} have demonstrated strong capabilities in the \textit{de novo} design of general proteins \cite{RFD, ProteinGenerator, CarbonNovo, FrameDiff, chroma, proteus}. However, peptide design poses unique challenges due to the higher conformational flexibility of peptides. To address this, recent methods \cite{DPB, ppflow, pepflow, pepglad} leverage receptor structures to stabilize peptide conformations and jointly design their sequences and structures.
Notably, these approaches mainly focus on standard amino acids and are not designed to accommodate NSAAs. When present, NSAAs are typically substituted with their standard counterparts, which can lead to suboptimal structure stability (\eg, unreal structures in Fig.~\ref{fig.SubstitutionNSAA}(\textbf{a})) and the loss of the unique physicochemical properties of NSAAs (\eg, worse affinity in Fig.~\ref{fig.SubstitutionNSAA}(\textbf{b})).
Therefore, \textit{how to develop a peptide sequence–structure co-design framework that supports NSAAs} remains an unsolved problem.

\begin{figure*}[t]
\begin{center}
\centerline{\includegraphics[width=\textwidth]{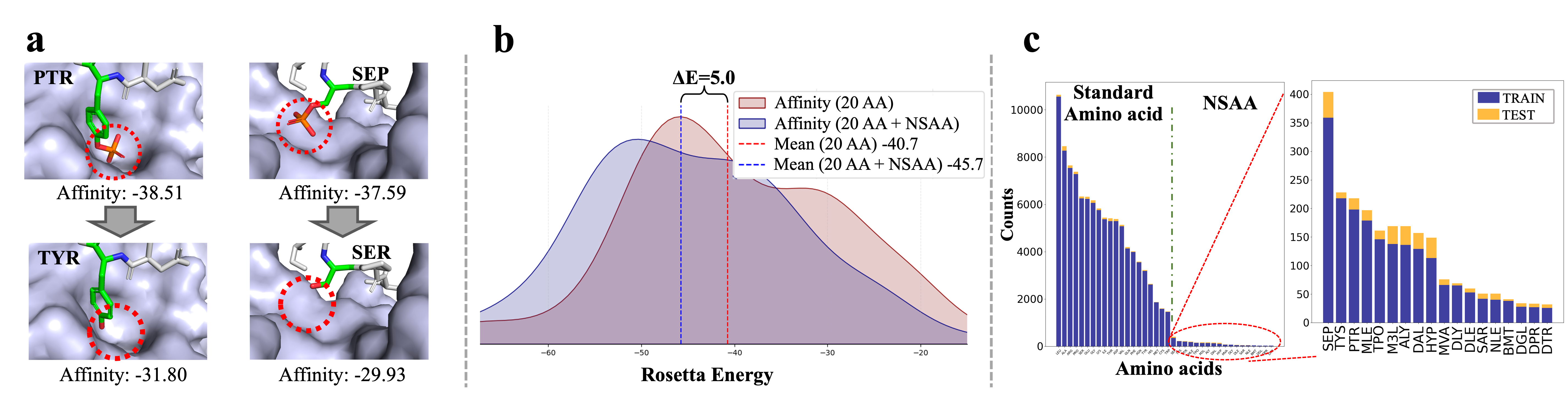}}
\vspace{-3mm}
\caption{\textbf{(a)} Naively substituting NSAAs (\eg, PTR and SEP) with standard ones (\eg, TYR and SER) results in suboptimal conformations. \textbf{(b)} Distributions of binding affinity before (\textcolor{darkblue}{blue}) and after (\textcolor{darkred}{red}) NSAA substitution. \textbf{(c)} Distribution of Standard and Non-standard Amino Acids.}
\label{fig.SubstitutionNSAA}
\end{center}
\vspace{-8mm}
\end{figure*}

Critically, we identify three fundamental challenges in peptide design with NSAAs.
\textbf{First}, the extreme scarcity of NSAA data presents a significant long-tailed learning challenge, as illustrated in Fig.~\ref{fig.SubstitutionNSAA}(\textbf{c}). For instance, the 18 most frequent NSAAs in our dataset (derived from Q-BioLip \cite{Q-BioLip} and PepBDB \cite{PepBDB}) collectively comprise around 2\% of the total.
Treating standard and non-standard amino acids equally during training will inevitably cause the model to be dominated by standard residues, resulting in representation collapse and poor performance on NSAAs.
\textbf{Second}, many NSAAs are derived from standard amino acids with subtle modifications to their side-chain functional groups. However, most existing generative models use only torsion angles to represent side chains \cite{DPB, pepflow, pepglad}, which struggle to capture the structural details critical for distinguishing NSAAs. For example, TYR (a standard amino acid) and DTR (an NSAA) exhibit similar torsion angle distributions (Fig.~\ref{Fig.torsion_dist}) but differ substantially in their side-chain atom composition. Capturing such differences accurately requires more fine-grained side-chain information, \eg, atomic level.
\textbf{Third}, the absence of a dedicated peptide folding model that supports both NSAA incorporation and binding site specification hinders progress in structure-based peptide binder analysis. For example, although AlphaFold3 \cite{af3} accommodates various NSAAs, its inability to specify target pockets often leads to off-target binding predictions. Conversely, model such as PepFlow \cite{pepflow} lacks support for NSAAs, limiting its utility in the design of practical peptide therapeutics.


To tackle the above challenges, we introduce \textbf{\ourmodel}, a unified flow-matching model that jointly supports the \textit{generation and structure prediction} of peptides with NSAAs. The core ideas of \ourmodel~are:
\textbf{(I)} A \textbf{R}esidue \textbf{F}requency-\textbf{G}uided \textbf{M}odification (RFGM) mechanism that mitigates the over-penalization of rare amino acids, which is achieved by calibrating the model’s output distribution based on residue occurrence frequencies.
\textbf{(II)} A \textbf{P}rogressive \textbf{S}ide-chain \textbf{P}erception (PSP) module that predicts torsion angles and atom-level offsets in a coarse-to-fine manner. To ensure compatibility between sequence and side chain, PSP decouples their co-generation and predicts side-chain conformations based on the generated residue type. Furthermore, to better capture critical side-chain interactions, PSP places greater emphasis on residues that are spatially proximal to the protein pocket.
\textbf{(III)} During training, \ourmodel~jointly predicts sequential and structural information from noised inputs, enabling it to capture the underlying mapping from sequence to structure. As a result, during inference, given a protein pocket and a peptide sequence, the model can accurately fold the peptide into the designated site. Notably, it generalizes naturally to peptides with NSAAs, addressing a key limitation of existing peptide folding models.

The main contributions of this work are as follows:
\begin{itemize}
 \item To the best of our knowledge, \textbf{\ourmodel} is the first generative framework that supports the co-design of peptides with non-standard amino acids (NSAAs). Moreover, \ourmodel~exhibits strong peptide folding capabilities for both standard and non-standard amino acids, advancing the development of peptide design models.
\item We propose a {Residue Frequency-Guided Modification (RFGM)} strategy to address the long-tailed issues posed by NSAAs. We provide both theoretical analysis and comprehensive empirical results to validate its effectiveness.
\item We introduce a Progressive Side-chain Perception module to more effectively leverage side-chain information for NSAA modeling. Additionally, an Interaction-Aware Weighting (IAW) module explicitly guides the model’s attention toward residues at the binding interface.
\end{itemize}
Experimental results demonstrate that \ourmodel~achieves improvements of 6.23\%  in AAR and 5.12\% in binding affinity for peptide design with NSAAs. On the pocket-specific peptide folding task, \ourmodel~outperforms AlphaFold3 by 20.4 \AA~in RMSD and achieves a 17.76\% higher success rate.


\section{Related Works}
\subsection{Generative Models for \textit{De novo} Protein Design}
 Recent years have witnessed the rapid advancement of deep generative models \cite{DPM, DDPM, SGM, CFM, RFM} in the field of \textit{de novo} protein design. As a pioneering approach, RFdiffusion \cite{RFD} generates high-quality protein backbones through structural diffusion, and subsequently employs ProteinMPNN \cite{ProteinMPNN} \cite{ProteinMPNN} to design the corresponding amino acid sequences.
After that, several approaches developed this backbone generation method in terms of theory and application \cite{FrameDiff, chroma, proteus}. Considering the consistency between structure and sequence, ProteinGenerator \cite{ProteinGenerator} and CarbonNovo \cite{CarbonNovo} expanded to protein sequence-structure co-design and achieved high consistency between these two modalities. 
Building on the progress in protein design, peptide generation methods have been developed to design short-chain binders targeting specific proteins \cite{DPB, ppflow, pepflow, pepglad, pephar}. While these approaches have shown promising results, they are restricted to the 20 standard amino acids and do not support NSAAs.
Moreover, these methods rarely consider detailed side-chain structures, which carry rich sequential and structural information. Although Pallatom \cite{pallatom} introduced a novel representation to encode 14 common atom types, it does not model interactions between the target and binder, and struggles to generalize to peptides with NSAAs due to the extremely limited data. By contrast, \ourmodel~ addresses the imbalanced issues posed by NSAAs while modeling side chains in a coarse-to-fine manner. It also incorporates target interface information to enable accurate sequence–structure co-design and folding for peptides.

\subsection{Long-tailed Learning}
Long-tailed learning aims to address the challenge of imbalanced data, where certain categories have remarkably fewer samples than others. This approach was initially applied in discriminative models. For instance, Menon \textit{et al.}~\cite{LogitAdj} introduced a statistical framework that included classification logit correction and loss modification. Building on this post-hoc strategy, Seesaw loss \cite{seesaw} was proposed to tackle long-tailed issues by carefully modifying gradients. Beyond post-processing methods, other approaches emphasized data sampling techniques, such as over-sampling tail classes \cite{oversampling, FSOD} or under-sampling dominant ones \cite{undersampling}. For example, TIMED-Design \cite{TIMED-Design-undersampling} improved model performance by discarding over-sampled amino acid types. Recently, long-tailed issues in generative models have received significant research interest. Several works \cite{CBDM, DiffROP, ContrastiveDiffusion} deployed contrastive learning between head and tail classes to generate better and more diverse samples for underrepresented categories. To tackle more fine-grained long-tailed learning challenges, Balancing Logit Variation \cite{BLV} adjusted the feature space by introducing class-frequency-related noise to perturb pixel-level logits. 
It is worth noting that most of these methods focus on imbalance problems in images. In contrast, \ourmodel~is an early effort to address the imbalance of amino acid types in peptide design with NSAAs, tackling the long-tailed challenge at the residue level, supported by both theoretical analysis and empirical validation.

\section{Methodology}
In Sec.\ref{sec.Pepflow}, we introduce the overall framework of \ourmodel~for co-designing peptide sequences and structures with NSAAs via flow matching. Next, we present RFGM to model the long-tailed distribution of NSAAs in Sec.\ref{sec.RFGM}. After that, we elaborate on our side-chain modeling in Sec.\ref{sec.SideChain}.

\begin{figure*}[ht]
\begin{center}
\centerline{\includegraphics[width=\textwidth]{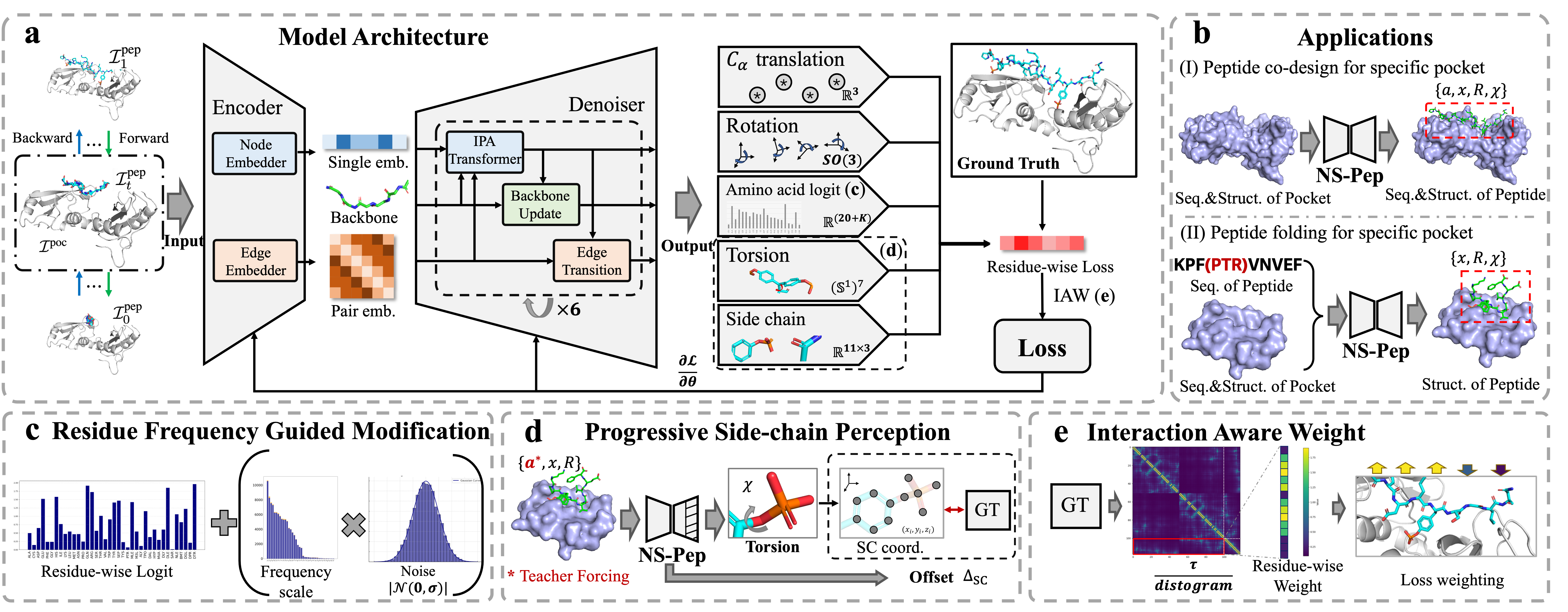}}
\vspace{-2mm}
\caption{
\textbf{Overview of \ourmodel.}
(\textbf{a}) The model takes a noised peptide and a protein pocket as input, and generates the peptide sequence and structure via flow matching. 
(\textbf{b}) \ourmodel~simultaneously supports NSAA-inclusion peptide co-design and folding with pocket conditioning. 
(\textbf{c}) RFGM facilitates the learning of long-tailed distributions. 
(\textbf{d}) PSP for coarse-to-fine side-chain structure modeling. 
(\textbf{e}) IAW guides the peptide generation toward functionally crucial hotspot residues.
}
\vspace{-8mm}
\label{fig.MainFig}
\end{center}
\end{figure*}

\subsection{Overview of \ourmodel} \label{sec.Pepflow}
Mathematically, a protein with $n$ residues is parameterized by the set of variables $\{ a^j, x^j, R^j, \chi^j \}_{j=1}^{n}$. The residue type $a^j \in \{1, \ldots, C\}$ includes the 20 standard amino acids and the $(C-20)$ most frequently used NSAAs considered in \ourmodel. For the $j$-th residue, $x^j \in \mathbb{R}^3$ and $R^j \in \mathbb{R}^{3 \times 3}$ represent the translation and rotation in the $N$–$C_\alpha$–$C$ backbone frame. The angle vector $\chi^j \in [0, 2\pi)^7$ consists of one $\psi^j$ angle for the backbone oxygen atom and six side-chain torsion angles. As shown in Fig.\ref{fig.MainFig}, the goal of \ourmodel~is to generate a peptide binder $\mathcal{I}^{\text{pep}} = \{ a^j, x^j, R^j, \chi^j \}_{j=1}^{n}$ of length $n$ including NSAAs, given a protein pocket $\mathcal{I}^{\text{poc}} = \{a^j, x^j, R^j, \chi^j \}_{j=1}^{m}$ of length $m$.

To model the variables of different manifolds, a Riemannian version of Conditional Flow Matching (CFM) \cite{CFM,RFM, pepflow} is applied. Given a variable $h\in\{a, x, R \}$ in a specific manifold, it leverages a time-dependent neural network $v_{t,\theta}(h_t, t): \mathbb{R}^d \times [0,1] \rightarrow \mathbb{R}^d$ to approximate a conditional vector field $u_{t}(h_0, h_1, t) : \mathbb{R}^d \times \mathbb{R}^d \times [0,1] \rightarrow \mathbb{R}^d$, which is denoted as:
\begin{equation}
   \mathcal{L}_{\text{CFM}}(\theta)=\mathbb{E}_{t,p_{1}(h_1),p_{0}(h_0)}||v_{t,\theta}(h_t, t) - u_{t}(h_0, h_1, t)||^2
\end{equation}
where $u_{t}(h_0, h_1, t)$ refers to $\frac{d}{dt}[\text{exp}_{h_{0}}(t \text{log}_{h_0}(h_1))]$ at the manifold. After model training, $v_{t,\theta}$ can generate samples starting from $h_0$ by solving the ordinary differential equation $\frac{dh_t}{dt}=v_{t,\theta}(h_t, t)$.

In this paper, we propose a peptide generation model, \ourmodel, which first generates $\{ a^j, x^j, R^j \}_{j=1}^{n}$ via flow matching and then predicts the side-chain information. The total loss is as follows:
\begin{equation}
    \mathcal{L} = \mathbb{E}_{t}[\sum^{n}_{j=1} w_{j}(\sum_{k=\{x,R\}} \lambda_{k}  \mathcal{L}_{\text{CFM},k}^{j} + \lambda_{a} \mathcal{L}_{\text{CFM},a}^{j} + \lambda_{\text{SC}} \mathcal{L}_{\text{PSP}}^{j})]
    \label{eq.TotalLoss}
\end{equation}
where $\mathcal{L}_{\text{CFM},a}^{j}$ is the flow-matching loss for residue type generation using modified logits to address long-tailed distribution issues (Sec.~\ref{sec.RFGM}); $\mathcal{L}_{\text{PSP}}^{j}$ denotes the progressive side-chain perception loss for side-chain prediction (Sec.~\ref{sec.SideChain}); $w_j$ is the proposed interaction-aware weight (Sec.~\ref{sec.SideChain}) applied to residue-wise losses; and $\lambda_{x},\lambda_{R},\lambda_{a},\lambda_{\text{SC}}$ denotes hyperparameters.

\subsection{Residue Frequency Guided Modification} \label{sec.RFGM}
Since NSAAs account for only a small fraction of all amino acids (\eg, only $\sim$2\%), it is extremely challenging for a generative model to accurately capture information from these long-tailed classes. As a result, the model prediction is often dominated by \textit{head} classes (standard amino acids), while \textit{tail} classes (NSAAs) are underrepresented and insufficiently learned during training.
We observe that a core issue lies in the training process: the logits of tail classes receive cumulatively large penalties. To address it, we propose \textbf{R}esidue \textbf{F}requency-\textbf{G}uided \textbf{M}odification (RFGM), which calibrates the gradient penalties by injecting frequency-weighted perturbations into the residue-type logits (Fig.~\ref{fig.MainFig}(\textbf{c})).

Formally, for the $j$-th position in the peptide sequence, the residue-type flow matching loss $\mathcal{L}_{\text{CFM},a}^{j}$ in Eq.~\ref{eq.TotalLoss} induces to a standard cross-entropy loss:
\begin{equation}
\mathcal{L}_{\text{CE}}(\mathbf{z}, \mathbf{y}) = -\sum^{C}_{i=1}y_{i}\text{log}(p_{i}), \ \text{with} \ p_{i}=\frac{e^{z_{i}}}{\sum^{C}_{j=1}e^{z_{j}}} \in (0,1),
\end{equation} 
where $p_i$ is the predicted probability of the $i$-th amino acid type, and $y_i \in \{0, 1\}$ is the corresponding one-hot label. These probabilities are computed from the logits $\mathbf{z} = [z_1, \ldots, z_C]$ via softmax, where $z_i$ is the logit for class $i$ and $C$ is the total number of amino acid types. For notational simplicity, we write $\mathcal{L}_{\text{CE}}(\mathbf{z})$ in place of $\mathcal{L}_{\text{CE}}(\mathbf{z}, \mathbf{y})$ throughout the following discussion. Therefore, one can derive the gradient of $\mathcal{L}_{\text{CE}}(\mathbf{z})$ with respect to $z_i$:
\begin{equation}
\frac{\partial \mathcal{L}_{\text{CE}}(\textbf{z})}{\partial z_{i}} = p_{i} - y_i
\end{equation}

During training on samples with head-class labels, tail classes appear as negative labels, meaning that the their class are not the ground truth. In such cases, the gradient of the cross-entropy loss with respect to the logit $z_i$ is
$\frac{\partial \mathcal{L}_{\text{CE}}(\mathbf{z})}{\partial z_i} = p_i \in (0,1)$.
Under gradient descent, this results in an update of $-p_i$, which is repeatedly applied to the output neurons corresponding to these tail classes. This leads to cumulatively large gradient penalties and tends to over-suppress the predicted logits of rare classes, making them difficult to recall during inference.

To mitigate the above issue, we propose a Residue Frequency-Guided Modification (RFGM) strategy, inspired by the long-tailed learning works~\cite{seesaw, BLV}. It aims to reduce the gradient penalty on tail classes by lowering their predicted probabilities $p_i$ \textbf{only} for loss calculation. Specifically, RFGM introduces class-frequency-aware perturbations to all elements of logits $\mathbf{z}$:
\begin{equation}
z_{i}' = z_{i} + \frac{\text{max} \ \mathbf{v}}{v_{i}} |\delta(\sigma)|, \ \text{with} \ v_i = \text{log} \frac{\sum^{C}_{j=1}n_{j}}{n_{i}},
\label{eq.RFGM}
\end{equation}
where $\mathbf{v}=[v_1, ..., v_{C}]$ are frequency-related weights, $n_i$ is the number of instances of class $i$, $\delta(\sigma)$ is a Gaussian variable with zero mean and standard deviation $\sigma$.
Given that the weight $v_i$ is inversely proportional to its category proportion, the scale of RFGM, \ie $\frac{\text{max} \ \mathbf{v}}{v_{i}}$, \textbf{increases} with class frequency. Although the scaling behavior introduced by RFGM may appear counter-intuitive, we demonstrate that it inclines to reduce the gradient magnitude for tail classes after logit modification, \ie $\frac{\partial \mathcal{L}_{\text{CE}}(\mathbf{z}')}{\partial z_i} \leq \frac{\partial \mathcal{L}_{\text{CE}}(\mathbf{z})}{\partial z_i}$, from both probabilistic and expectation perspectives (please see Appendix~\ref{sec.AnalysisRFGM}). 

As a result, when a tail class is treated as a negative label, RFGM mitigates its gradient penalty, preventing its corresponding logit from being overly suppressed. Conversely, head-class logits (when treated as negative) receive relatively stronger penalties, encouraging the model to better distinguish among frequent classes. Furthermore, the introduction of uncertainty via $|\delta(\sigma)|$ into the logits acts analogously to noise-denoising strategies in self-supervised learning \cite{ESM, self-supervised}, thereby enhancing representation learning in sequence prediction tasks (Tab.~\ref{Fig.ablation_noise}).

\subsection{Side-Chain Learning of \ourmodel} \label{sec.SideChain}

Protein side chains encode rich sequential and structural information, which is especially crucial for NSAAs due to their data scarcity. Sufficient side-chain learning enables the model to capture the unique characteristics of NSAAs and distinguish them from standard amino acids. In this section, we introduce two strategies to leverage side-chain information:
(I) \textbf{P}rogressive \textbf{S}ide-chain \textbf{P}erception (PSP); and
(II) \textbf{I}nteraction-\textbf{A}ware loss \textbf{W}eighting (IAW).

\textbf{Progressive Side-chain Perception}. Prior work such as PepFlow~\cite{pepflow} jointly generates amino acid sequences and their corresponding side chains. However, this may introduce inconsistencies and confuse the model. For instance, before the residue type is determined, the associated rigid groups—represented by torsion angles—remain ambiguous. Moreover, during generation, the continuous evolution of the sequence can lead to unstable or incoherent side-chain predictions.
Beyond this, torsion angles alone provide a coarse representation of side chains: they only describe rotations around rotatable bonds, while ignoring the detailed atomic geometry. Our statistical analysis further confirms this limitation—several amino acids with substantially different side-chain structures share nearly identical torsion angle distributions (\ie Fig.~\ref{Fig.torsion_dist}), making it difficult for the model to distinguish them. These observations highlight the need to both decouple sequence and side-chain modeling during generation, and to integrate a finer-grained, atom-level representation for side chains, as shown in Fig.~\ref{fig.MainFig}(\textbf{d}).

To decouple sequence and side-chain generation, we first generate residue identities via flow matching, and predict the corresponding side-chain conformations conditioned on the assigned residues in a progressive manner. In other words, we decompose the peptide co-design task into two stages: a generation task for $p(\{a^j,x^j,R^j\}^{n}_{j=1}|\mathcal{I}^{\text{poc}})$ and a side-chain prediction task, \ie $p(\{\chi^j, \Delta_{\text{SC}}^j \}^{n}_{j=1}|\mathcal{I}^{\text{poc}}, \{a^j,x^j,R^j\}^{n}_{j=1})$. Note that $\Delta_{\text{SC}}^j$ denotes the difference in side-chain conformation between the constructed structure and the reference structure, measured in the local coordinate frame. Inspired by UniGEM \cite{unigem} from the molecular modeling domain, we train the generator and side-chain predictor jointly, sharing part of their parameters to facilitate knowledge transfer between the two tasks. Specifically, during training, the generator is tasked with recovering the noised sequence and backbone structure:
\begin{equation}
    \{\hat{a}^j,\hat{x}^j,\hat{R}^j\}^{n}_{j=1} = \mathcal{F}_\theta(\mathcal{I}^{\text{poc}}, \{a^j_t,x^j_t,R^j_t\}^{n}_{j=1})
\end{equation}
where $\mathcal{F}_\theta$ denotes the proposed \ourmodel~model, and $a^j_t$ and $\hat{a}^j$ represent the noised input sequence and the predicted (denoised) sequence for the $j$-th residue, respectively. Other notations follow the same convention. 

To ensure structural stability, the side-chain predictor is only activated when $t > 0.75$, where the backbone is sufficiently refined. It then estimates torsion angles and side-chain atom offsets:
\begin{equation}
    \{\hat{\chi}^j, \Delta_{\text{SC}}^j \}^{n}_{j=1} = \mathcal{F}_\theta(\mathcal{I}^{\text{poc}}, \{a^j,\hat{x}^j,\hat{R}^j\}^{n}_{j=1})
    \label{eq.chi_pred}
\end{equation}
where $a^j$ refers to the ground-truth residue identity. The purpose is to align the labels of $a^j$ and $\chi^j$ and stabilize the training. Compared to the generation task, side-chain prediction is also straightforward.

To bridge the predicted torsion angles and the full side-chain geometry, we reconstruct the all-atom conformation using $\hat{\chi}$ as follows:
\begin{equation}
    \hat{X}^j = \mathcal{M}(a^j, \hat{x}^j, \hat{R}^j, \hat{\chi}^j)
\end{equation}
where $\mathcal{M}$ is the mapping operation that uses idealized atomic coordinates to assemble the all-atom peptide structure, following the conventions established by AlphaFold2 \cite{AF2}. Additionally, an all-atom structure is constructed to serve as the reference, \ie $X^j = \mathcal{M}(a^j, \hat{x}^j, \hat{R}^j, \chi^j)$. Therefore, the PSP loss in Eq.~\ref{eq.TotalLoss} is formulated as:
\begin{equation}
    \mathcal{L}^j_{\text{PSP}} = 
    ||\text{wrap}(\hat{\chi}^j) -  \text{wrap}(\chi^j)||^2 + \lambda_{\text{aa}} ||\Delta_{\text{SC}}^j -  \mathbb{T}^{-1}_{\{ \hat{x}^j, \hat{R}^j \}}(X^j - \hat{X}^j)||^2
    \label{eq.psp}
\end{equation}
where $\text{wrap}(u) = (u + \pi) \bmod (2\pi) - \pi$ maps angles to the interval $[-\pi, \pi)$, and $\lambda_{\text{aa}}$ is a weighting hyperparameter; $\mathbb{T}^{-1}_{\{ \hat{x}^j, \hat{R}^j \}}(\cdot)$ is a rigid transformation from global to local coordinates. The second term focuses exclusively on the side-chain geometry differences between the predicted and reference structures, as the backbone components $\{\hat{x}^j, \hat{R}^j\}$ are shared in both cases. During inference, we first construct the all-atom structure using the predicted torsion angles $\hat{X}$, and then apply the predicted offsets $\mathbb{T}_{\{ \hat{x}^j, \hat{R}^j \}}(\Delta_{\text{SC}}^j)$ to calibrate the final structures.

\textbf{Interaction-Aware Weighting}. While peptide structures are typically flexible, a subset of residues near the target pocket tends to adopt conserved conformations to facilitate binding. To emphasize these key residues during generation, we apply \textbf{I}nteraction-\textbf{A}ware \textbf{W}eighting (IAW), assigning weights according to each residue's distance from the binding interface, as shown in Fig.~\ref{fig.MainFig}(\textbf{e}).

Specifically, for an $n$-residue peptide bound to an $m$-residue protein, we compute a distance matrix $D \in \mathbb{R}^{n \times m}$, including side-chain interactions. For the $j$-th residue of peptide and $k$-th residue of protein, the entry $D_{j,k}$ denotes the Euclidean distance between the closest pair of atoms from these two residues. 
Thereby, the weight for the $j$-th peptide residue is defined as:
\begin{equation}
    w_{j} = \frac{\tau}{\text{min}_{k=1}^{m}(D_{j, k})}, \ j = \{ 1, \cdots, n\}
    \label{eq.IAW}
\end{equation}
where $\tau$ is a predefined distance threshold. Intuitively, residues closer to the protein receive higher weights, prompting the model to focus on key interactions during training (Eq.~\ref{eq.TotalLoss}). Since $D$ is derived from the ground-truth structure, these weights can be computed before training and reused throughout.

\section{Experiment}
\label{sec.exp}
In this section, we begin by benchmarking the peptide-generation performance of \ourmodel~against state-of-the-art peptide (protein) generative models (Sec.\ref{Exp.Comparison}). We then evaluate its ability to fold peptides into designated protein pockets (Sec.\ref{Exp.Folding}). A representative case study follows, illustrating the model’s decision-making process (Sec.\ref{Exp.Explain}). Finally, an ablation study quantifies the contribution of each proposed component (Sec.\ref{Exp.Ablation}). Implementation details and pseudo codes are provided in Appendix~\ref{sec.Implementation}, and further results in Appendix~\ref{sec.AdditionalResult}.

\textbf{Dataset.} In this work, the benchmark dataset derives from Q-BioLip \cite{Q-BioLip} and PepBDB \cite{PepBDB}, comprising 10,348 protein-peptide complexes in total. Peptide lengths range from 3 to 25 residues. We split the dataset into train: validation: test = 8,828: 1,065: 219 according to sequential cluster \cite{mmseqs2} and NSAA proportions. It should be mentioned that the test set with 219 complexes is call \textbf{General} test set.
We also extract a subset from the General test set in which every complex contains NSAAs, referred to as the \textbf{NSAA} test set. This subset is used to magnify the NSAA's impact on affinity. More details about data splitting are shown in Appendix~\ref{sec.DataSplit}.

\subsection{General \textit{De novo} Peptide Design} \label{Exp.Comparison}
This task evaluates eight peptide(protein) generation methods comprehensively with multiple metrics.

\begin{table*}[t]
\centering
\caption{Comparison for peptide generation on the \textbf{General} test set}
\resizebox{\textwidth}{!}{%
\begin{tabular}{lcccccccc}
\toprule
\multirow{2}{*}{Method} & \multicolumn{5}{c}{Recovery} & \multicolumn{1}{c}{Energy} & \multicolumn{2}{c}{Design} \\
\cmidrule(r){2-6} \cmidrule(r){7-7} \cmidrule(r){8-9}
 & AAR \% $\uparrow$ & AAR(S) \% $\uparrow$ & RMSD \AA\ $\downarrow$ & SSR \% $\uparrow$ & BSR \% $\uparrow$ & AFF \% $\uparrow$ & scRMSD \AA\ $\downarrow$ & Diversity $\uparrow$ \\
\midrule
RFdiffusion \cite{RFD} & 8.66 & 8.69 & 17.57 & 71.75 & 6.52 & 12.06 & 19.64 & 0.41 \\
ProteinGenerator \cite{ProteinGenerator} & 7.08 & 7.12 & 17.26 & 75.30 & 4.28 & 8.91 & 23.02 & 0.39 \\
PPFLOW \cite{ppflow} & 8.63 & 8.87 & 9.45 & 0.11 & 75.14 & 6.53 & 15.77 & \underline{0.66} \\
DiffPepBuilder \cite{DPB} & 6.71 & 6.77 & 8.36 & 75.49 & \textbf{85.86} & \underline{12.11} & 14.65 & 0.51 \\
PepGLAD \cite{pepglad} & 6.82 & 6.90 & 7.78 & 78.72 & 81.48 & 10.36 & 13.43 & \textbf{0.72} \\
PepFlow \cite{pepflow} & \underline{16.30} & \underline{16.76} & 4.98 & \textbf{85.80} & 81.50 & 11.60 & 12.32 & 0.50 \\
PepFlow* & 15.88 & 15.88 & \underline{4.94} & 83.40 & 84.36 & 11.39 & \underline{11.67} & 0.51 \\
\ourmodel(ours) & \textbf{22.53} {\scriptsize  \textcolor{darkgreen}{\textbf{+6.23}}} & \textbf{22.35} {\scriptsize  \textcolor{darkgreen}{\textbf{+5.59}}} & \textbf{4.92} {\scriptsize  \textcolor{darkgreen}{\textbf{-0.02}}} & \underline{85.00} {\scriptsize  \textcolor{darkred}{\textbf{-0.80}}} & \underline{81.62} {\scriptsize  \textcolor{darkred}{\textbf{-4.24}}} & \textbf{17.23} {\scriptsize  \textcolor{darkgreen}{\textbf{+5.12}}} & \textbf{11.50} {\scriptsize  \textcolor{darkgreen}{\textbf{-0.17}}} & 0.49 {\scriptsize  \textcolor{darkred}{\textbf{-0.23}}} \\
\bottomrule
\multicolumn{8}{l}{\small \textbf{Note:} (i) The best values are \textbf{bold}; (ii) The second best values are \underline{underlined};}\\
\multicolumn{8}{l}{\small \ \ \ \ \ \ \ \ \ \ (iii) \textcolor{darkgreen}{Green} and \textcolor{darkred}{red} font colors indicate that our performances are better or worse than the other best ones by certain values.}\\
\end{tabular}%
}
\vspace{-8mm}
\label{tab.comparison}
\end{table*}

\textbf{Baselines.} Seven state-of-the-art generative models are chosen as baselines, including RFdiffusion \cite{RFD}, ProteinGenerator \cite{ProteinGenerator}, PPFLOW \cite{ppflow}, DiffPepBuilder \cite{DPB}, PepGLAD \cite{pepglad}, PepFlow \cite{pepflow} and PepFlow* (expanded to NSAAs). More implementation details of baselines are shown in Appendix~\ref{sec.BaselineDetail}.

\textbf{Metrics.} 
Generated peptides are evaluated from three aspects with a total of eight metrics. (i) Recovery. Amino Acid Recovery (\textbf{AAR}) calculates the ratio of the same sequential identity between the ground truth and designed peptides, including NSAAs and standard ones. \textbf{AAR(S)} first substitutes all NSAAs of the ground truth peptides with their standard counterparts, than calculates the recovery. The \textbf{RMSD} is a structural recovery metric that aligns the complex by \textit{pockets} first and measures the RMSD between $C_{\alpha}$ atoms of peptides. Secondary-Structure Recovery (\textbf{SSR}) and Binding-Site Recovery (\textbf{BSR}) denotes the ratios of shared secondary structure and binding sites, respectively. (ii) Energy. The \textbf{AFF} metric computes the percentage of generated samples whose binding affinity, calculated by Rosetta \cite{rosetta}, is lower than that of the native binder. (iii) Design. The \textbf{scRMSD} refers to self-consistency RMSD that compares the difference between generated and predicted structures in terms of $C_{\alpha}$ atoms \cite{FrameDiff}. Note that, given the generated peptide sequence and the sequence of the protein target, we use Chai-1 \cite{Chai-1} to predict the structure of the peptide-protein complex. Besides, the \textbf{Diversity} metric is the average of one minus pairwise TM-score \cite{tm-score} among the generated samples, indicating the divergence of peptide structure. More descriptions of evaluation metrics can be found in Appendix~\ref{sec.AppendixExp}.

\textbf{Results.} 
As shown in Tab.~\ref{tab.comparison}, \ourmodel~significantly enhances AAR and AFF by 6.23\% and 5.12\%, respectively, compared to the second best models. Note that PepFlow* falls short of the original PepFlow in AAR, suggesting that a naive extension to NSAAs struggles with the long-tailed residue distribution.
For structural recovery, a high BSR is vital for achieving a low RMSD, as RMSD also depends on the alignment of the complex with respect to the protein pocket. There, methods such as RFdiffusion and ProteinGenerator achieve worse RMSD because of lower BSRs. Although \ourmodel\ does not attain the highest BSR, it excels in AFF, indicating that it effectively captures key residues.
\ourmodel, on the other hand, achieves the lowest scRMSD, indicating the highest sequence–structure consistency among all methods. As a result, \ourmodel, which attains the highest AAR, also demonstrates the lowest RMSD. Multiple cases are visualized in Fig.~\ref{Fig.exp}(\textbf{a}) and Fig.~\ref{Fig.App.Visualization_v1}.

\subsection{Peptide Folding} \label{Exp.Folding}
This experiment evaluates the peptide folding performance when targeting specific protein pockets and incorporating NSAAs. Apart from RMSD, BSR, and Diversity, we introduce two metrics: Success and AFF(Success). \textbf{Success} denotes the proportion of predicted structures with an RMSD below 2 \AA~ relative to the ground truth. \textbf{AFF(Success)} measures the percentage of successful samples whose predicted binding affinity is lower than that of the native binder. We also report two binary indicators: \textbf{NSAA} and \textbf{Binding Site}, denoting whether the folding model supports NSAA inclusion and binding site specification, respectively.

\begin{table}[ht]
\centering
\caption{Comparison for peptide folding on the \textbf{General} test set}
\label{tab.folding}
\resizebox{0.9\textwidth}{!}{%
\begin{tabular}{l|cc|ccccc}
\toprule
Method & NSAA & Binding Site & RMSD \AA $\downarrow$ & BSR \% $\uparrow$ & Success \% $\uparrow$ & AFF(Success) \% $\uparrow$ & Diversity $\uparrow$ \\
\midrule
ESMFold \cite{ESMFold} &  &  & 18.89 & 36.14 & 12.87 & 0.00 & 0.00 \\
AlphaFold3 \cite{af3} & \checkmark &   & 25.30 & 30.04 & 10.49 & 1.77 & 0.43 \\
PepFlow \cite{pepflow} &  & \checkmark & \underline{4.96} & \textbf{81.63} & \underline{25.52} & \underline{21.46} & \textbf{0.51}  \\
\ourmodel  & \checkmark & \checkmark &  \textbf{4.90} {\scriptsize  \textcolor{darkgreen}{\textbf{-0.06}}}  & \underline{81.61} {\scriptsize  \textcolor{darkred}{\textbf{-0.02}}} & \textbf{28.25} {\scriptsize  \textcolor{darkgreen}{\textbf{+2.73}}} & \textbf{25.45} {\scriptsize  \textcolor{darkgreen}{\textbf{+3.99}}} & \underline{0.50} {\scriptsize  \textcolor{darkred}{\textbf{-0.01}}}  \\
\bottomrule
\end{tabular}%
}
\end{table}

\textbf{Results.} As shown in Table~\ref{tab.folding}, both ESMFold \cite{ESMFold} and AlphaFold3 \cite{af3} exhibit poor RMSD performance, which can be attributed to their inability to specify binding pockets, as reflected by their low BSR scores. The case in Fig.\ref{Fig.exp}(\textbf{b}) illustrates that AlphaFold3 fails to fold the peptide into the designated pocket.

Compared to PepFlow, \ourmodel~achieves notably higher performance in both Success and AFF(Success). We hypothesize that AFF(Success) reflects a model’s ability to effectively incorporate NSAAs: when achieving similarly low ($C_{\alpha}$) RMSD, peptides containing native NSAAs can bind more tightly to the pocket than those using substituted residues, due to their ability to better mimic real side-chain interactions.

Moreover, when stratified by peptide length (Fig.\ref{Fig.exp}(\textbf{b})), \ourmodel~consistently outperforms PepFlow in both short and medium-length scenarios. Taken together, these results demonstrate that \ourmodel~fills an important gap, providing a peptide folding model that supports both NSAA incorporation and binding site specification.

\subsection{Case Study of \ourmodel} \label{Exp.Explain}

To illustrate how \ourmodel~determines NSAA placement, we examine case 3RM0 (PDB ID) in Fig.\ref{Fig.explain}, focusing on the $9^{\text{th}}$ residue of the peptide, which interacts with the $81^{\text{st}}$ residue in the receptor:
(i) At generation step $t = 0.5$, the IPA attention map reveals that the $81^{\text{st}}$ residue of receptor attends most strongly to the $9^{\text{th}}$ peptide residue, demonstrating \ourmodel’s ability to identify key interaction sites.
(ii) At this position, the top-4 generated residues are TYS (ground truth), PTR, TRP, and GLU—all of which share relevant physicochemical properties with the ground truth: PTR closely resembles TYS in both aromaticity and charge, TRP offers an aromatic ring, and GLU introduces a negative charge. These results highlight \ourmodel’s capacity to model both chemical and structural context effectively. Additional interpretability examples are provided in Appendix~\ref{sec.App.Explain}.

\begin{figure*}[ht]
\vspace{-2mm}
\begin{center}
\centerline{\includegraphics[width=\textwidth]{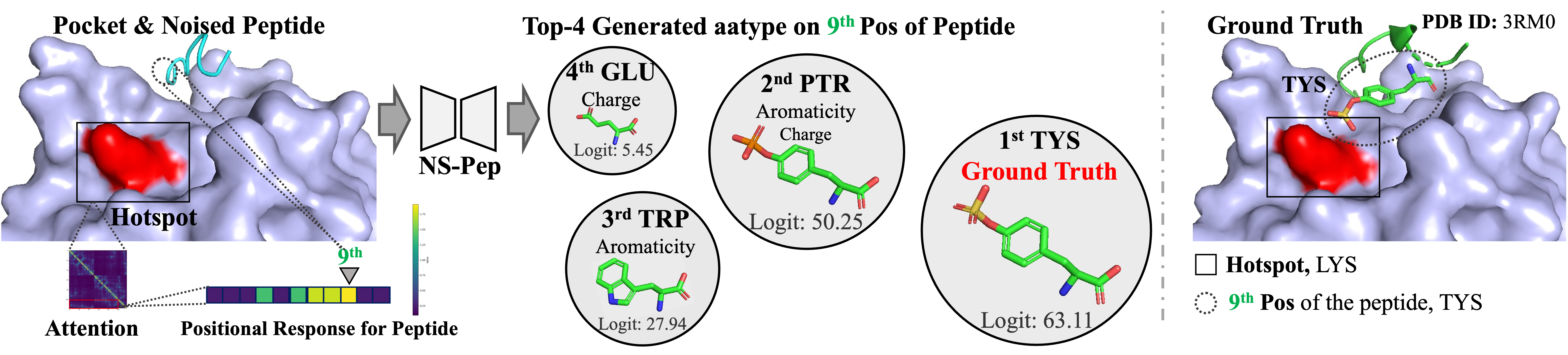}}
\caption{Interpretation of \ourmodel's generation for 3RM0 (PDB ID)}
\label{Fig.explain}
\vspace{-6mm}
\end{center}
\end{figure*}

\begin{figure*}[ht]
\vskip 0.2in
\begin{center}
\centerline{\includegraphics[width=\textwidth]{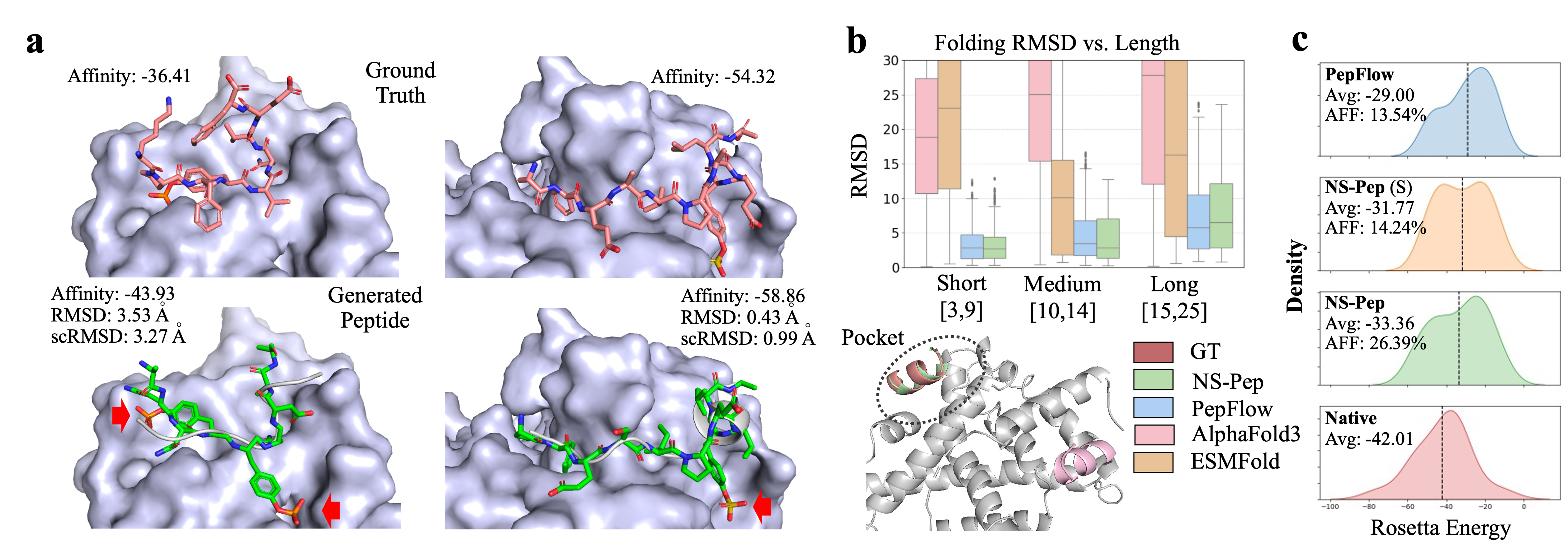}}
\caption{\textbf{Visualization of experimental results.} (\textbf{a}) Two generated peptides compared with their corresponding native ones (case: 1BMB, 3SHA). (\textbf{b}) \textbf{Top:} Folding performance across four models, stratified by peptide length. \textbf{Bottom:} Example showing a peptide predicted by AlphaFold3 that fails to bind the target pocket.  (\textbf{c}) Distribution of binding affinity. \ourmodel(S) substitutes all the NSAAs with their standard counterparts among the generation of \ourmodel.}
\label{Fig.exp}
\vspace{-6mm}
\end{center}
\end{figure*}

\subsection{Ablation Study} \label{Exp.Ablation}
\begin{table}[ht]
\centering
\caption{Ablation study for \ourmodel~on the \textbf{NSAA} test set.}
\resizebox{0.9\columnwidth}{!}{%
\begin{tabular}{lc|c|cccc|cccc}
\toprule
\multirow{4}{*}{\textbf{Methods}}
& NS & & \checkmark & \checkmark & \checkmark & \checkmark & \checkmark & \checkmark & \checkmark & \checkmark \\
& RFGM & & & \checkmark & & & & \checkmark & \checkmark & \checkmark \\ 
& PSP & & & & \checkmark & & \checkmark & & \checkmark & \checkmark \\
& IAW & & & & & \checkmark & \checkmark & \checkmark & & \checkmark \\ 
\midrule
\multirow{4}{*}{\textbf{Metrics}} 
& AAR \% $\uparrow$ & 30.80 & 30.01 & 30.72 & 30.46 & 29.62 & \underline{31.19} & 30.74 & 28.92 & \textbf{32.36} \\ 
& AAR(NS) \% $\uparrow$ & 0.00 & 2.08 & 19.05 & 13.10 & 12.80 & 10.12 & \underline{25.00} & 7.44 & \textbf{29.77} \\ 
& RMSD \AA $\downarrow$ & \textbf{3.04} & 4.20 & 3.60 & 5.28 & 5.54 & 5.69 & \underline{3.27} & 4.99 & 3.94 \\ 
& AFF \% $\uparrow$ & 13.54 & 13.19 & 26.04 & 24.65 & 15.63 & 14.58 & \textbf{27.78} & 10.42 & \underline{26.39} \\ 
\bottomrule
\end{tabular}%
}
\vspace{-4mm}
\label{tab.ablation}
\end{table}

This task is carried out on the \textbf{NSAA} test set to evaluate our approaches. Apart from the metrics mentioned before, we also investigate \textbf{AAR(NS)} that ONLY calculates the recovery of NSAAs. Note that \textbf{NS} denotes whether the model is extended to support NSAAs.

\textbf{Results.} Examining the AAR(NS) row in Tab.~\ref{tab.ablation}, we can draw the following conclusions: (i) each of the three additional methods individually improves this metric compared to the model only with NS; (ii) combining all these methods results in further enhancements; (iii) both the additive and subtractive experiments demonstrate the effectiveness of RFGM.

Fig.~\ref{Fig.exp}(\textbf{c}) validates the impact of NSAAs on affinity: substituting NSAAs in generated peptides (\ie \ourmodel~and \ourmodel(S)) leads to a sharp 12.15\% drop in AFF, highlighting their significant contribution.

\section{Conclusion and Limitation} \label{sec.conclusion}
In this study, we present \ourmodel, a peptide co-design framework capable of effectively incorporating non-standard amino acids (NSAAs). To address the intrinsic challenge posed by the long-tailed distribution of NSAAs, we introduce Residue Frequency-Guided Modification (RFGM), which calibrates output logits for rare residues. In addition, Progressive Side-chain Perception (PSP) enables accurate modeling of side-chain conformations, while Interaction-Aware Weighting (IAW) prioritizes residues proximal to the binding pocket.
\ourmodel~exhibits strong performance in \textit{de novo} peptide design, yielding improvements in both amino acid recovery and binding affinity, and demonstrates clear advantages in the pocket-specific peptide folding task. Nonetheless, its capacity to generalize to ultra-rare NSAA types (\eg, less than 10 samples) remains limited, suggesting the need for strategies such as few-shot learning.
Finally, further validation of designed peptides—particularly with respect to properties such as bioavailability and membrane permeability—remains an important direction for future work. Addressing these aspects may benefit from techniques such as reinforcement learning to iteratively refine the model for therapeutic applications. Note that \ourmodel~could be misused to design harmful peptides; users are urged to employ it responsibly. Overall, we expect that \ourmodel~will encourage further research into drug design with NSAAs.

{
\small
\bibliographystyle{IEEEtran}
\bibliography{MainTex}

\begin{thebibliography}{10}
\providecommand{\url}[1]{#1}
\csname url@samestyle\endcsname
\providecommand{\newblock}{\relax}
\providecommand{\bibinfo}[2]{#2}
\providecommand{\BIBentrySTDinterwordspacing}{\spaceskip=0pt\relax}
\providecommand{\BIBentryALTinterwordstretchfactor}{4}
\providecommand{\BIBentryALTinterwordspacing}{\spaceskip=\fontdimen2\font plus
\BIBentryALTinterwordstretchfactor\fontdimen3\font minus \fontdimen4\font\relax}
\providecommand{\BIBforeignlanguage}[2]{{%
\expandafter\ifx\csname l@#1\endcsname\relax
\typeout{** WARNING: IEEEtran.bst: No hyphenation pattern has been}%
\typeout{** loaded for the language `#1'. Using the pattern for}%
\typeout{** the default language instead.}%
\else
\language=\csname l@#1\endcsname
\fi
#2}}
\providecommand{\BIBdecl}{\relax}
\BIBdecl

\bibitem{pep_length}
E.~Oshimura and K.~Sakamoto, ``Amino acids, peptides, and proteins,'' \emph{Cosmet. Sci. Technol. Theor. Princ. Appl}, vol. 285, p. 303, 2017.

\bibitem{pep1}
E.~Petsalaki and R.~B. Russell, ``Peptide-mediated interactions in biological systems: new discoveries and applications,'' \emph{Current opinion in biotechnology}, vol.~19, no.~4, pp. 344--350, 2008.

\bibitem{pep-drug}
D.~J. Craik, D.~P. Fairlie, S.~Liras, and D.~Price, ``The future of peptide-based drugs,'' \emph{Chemical biology \& drug design}, vol.~81, no.~1, pp. 136--147, 2013.

\bibitem{pep-drug2}
K.~Fosgerau and T.~Hoffmann, ``Peptide therapeutics: current status and future directions,'' \emph{Drug discovery today}, vol.~20, no.~1, pp. 122--128, 2015.

\bibitem{NSAA1}
P.~Balaram, ``Non-standard amino acids in peptide design and protein engineering,'' \emph{Current Opinion in Structural Biology}, vol.~2, no.~6, pp. 845--851, 1992.

\bibitem{NSAA2}
M.~Ro{\v{z}}man, ``Proton affinity of several basic non-standard amino acids,'' \emph{Chemical physics letters}, vol. 543, pp. 50--54, 2012.

\bibitem{DPM}
J.~Sohl-Dickstein, E.~Weiss, N.~Maheswaranathan, and S.~Ganguli, ``Deep unsupervised learning using nonequilibrium thermodynamics,'' in \emph{International conference on machine learning}.\hskip 1em plus 0.5em minus 0.4em\relax PMLR, 2015, pp. 2256--2265.

\bibitem{DDPM}
J.~Ho, A.~Jain, and P.~Abbeel, ``Denoising diffusion probabilistic models,'' \emph{Advances in neural information processing systems}, vol.~33, pp. 6840--6851, 2020.

\bibitem{SGM}
Y.~Song, J.~Sohl-Dickstein, D.~P. Kingma, A.~Kumar, S.~Ermon, and B.~Poole, ``Score-based generative modeling through stochastic differential equations,'' \emph{arXiv preprint arXiv:2011.13456}, 2020.

\bibitem{CFM}
Y.~Lipman, R.~T. Chen, H.~Ben-Hamu, M.~Nickel, and M.~Le, ``Flow matching for generative modeling,'' \emph{arXiv preprint arXiv:2210.02747}, 2022.

\bibitem{RFD}
J.~L. Watson, D.~Juergens, N.~R. Bennett, B.~L. Trippe, J.~Yim, H.~E. Eisenach, W.~Ahern, A.~J. Borst, R.~J. Ragotte, L.~F. Milles \emph{et~al.}, ``Broadly applicable and accurate protein design by integrating structure prediction networks and diffusion generative models,'' \emph{BioRxiv}, pp. 2022--12, 2022.

\bibitem{ProteinGenerator}
S.~L. Lisanza, J.~M. Gershon, S.~W. Tipps, J.~N. Sims, L.~Arnoldt, S.~J. Hendel, M.~K. Simma, G.~Liu, M.~Yase, H.~Wu \emph{et~al.}, ``Multistate and functional protein design using rosettafold sequence space diffusion,'' \emph{Nature biotechnology}, pp. 1--11, 2024.

\bibitem{CarbonNovo}
M.~Ren, T.~Zhu, and H.~Zhang, ``Carbonnovo: Joint design of protein structure and sequence using a unified energy-based model,'' in \emph{Forty-first International Conference on Machine Learning}, 2024.

\bibitem{FrameDiff}
J.~Yim, B.~L. Trippe, V.~De~Bortoli, E.~Mathieu, A.~Doucet, R.~Barzilay, and T.~Jaakkola, ``Se (3) diffusion model with application to protein backbone generation,'' \emph{arXiv preprint arXiv:2302.02277}, 2023.

\bibitem{chroma}
J.~B. Ingraham, M.~Baranov, Z.~Costello, K.~W. Barber, W.~Wang, A.~Ismail, V.~Frappier, D.~M. Lord, C.~Ng-Thow-Hing, E.~R. Van~Vlack \emph{et~al.}, ``Illuminating protein space with a programmable generative model,'' \emph{Nature}, vol. 623, no. 7989, pp. 1070--1078, 2023.

\bibitem{proteus}
C.~Wang, Y.~Qu, Z.~Peng, Y.~Wang, H.~Zhu, D.~Chen, and L.~Cao, ``Proteus: exploring protein structure generation for enhanced designability and efficiency,'' \emph{bioRxiv}, pp. 2024--02, 2024.

\bibitem{DPB}
F.~Wang, Y.~Wang, L.~Feng, C.~Zhang, and L.~Lai, ``Target-specific de novo peptide binder design with diffpepbuilder,'' \emph{Journal of Chemical Information and Modeling}, 2024.

\bibitem{ppflow}
H.~Lin, O.~Zhang, H.~Zhao, D.~Jiang, L.~Wu, Z.~Liu, Y.~Huang, and S.~Z. Li, ``Ppflow: Target-aware peptide design with torsional flow matching,'' \emph{bioRxiv}, pp. 2024--03, 2024.

\bibitem{pepflow}
J.~Li, C.~Cheng, Z.~Wu, R.~Guo, S.~Luo, Z.~Ren, J.~Peng, and J.~Ma, ``Full-atom peptide design based on multi-modal flow matching,'' \emph{arXiv preprint arXiv:2406.00735}, 2024.

\bibitem{pepglad}
X.~Kong, Y.~Jia, W.~Huang, and Y.~Liu, ``Full-atom peptide design with geometric latent diffusion,'' in \emph{The Thirty-eighth Annual Conference on Neural Information Processing Systems}, 2024.

\bibitem{Q-BioLip}
H.~Wei, W.~Wang, Z.~Peng, and J.~Yang, ``Q-biolip: A comprehensive resource for quaternary structure-based protein--ligand interactions,'' \emph{Genomics, Proteomics \& Bioinformatics}, vol.~22, no.~1, 2024.

\bibitem{PepBDB}
Z.~Wen, J.~He, H.~Tao, and S.-Y. Huang, ``Pepbdb: a comprehensive structural database of biological peptide--protein interactions,'' \emph{Bioinformatics}, vol.~35, no.~1, pp. 175--177, 2019.

\bibitem{af3}
J.~Abramson, J.~Adler, J.~Dunger, R.~Evans, T.~Green, A.~Pritzel, O.~Ronneberger, L.~Willmore, A.~J. Ballard, J.~Bambrick \emph{et~al.}, ``Accurate structure prediction of biomolecular interactions with alphafold 3,'' \emph{Nature}, pp. 1--3, 2024.

\bibitem{RFM}
R.~T. Chen and Y.~Lipman, ``Riemannian flow matching on general geometries,'' \emph{arXiv preprint arXiv:2302.03660}, 2023.

\bibitem{ProteinMPNN}
J.~Dauparas, I.~Anishchenko, N.~Bennett, H.~Bai, R.~J. Ragotte, L.~F. Milles, B.~I. Wicky, A.~Courbet, R.~J. de~Haas, N.~Bethel \emph{et~al.}, ``Robust deep learning--based protein sequence design using proteinmpnn,'' \emph{Science}, vol. 378, no. 6615, pp. 49--56, 2022.

\bibitem{pephar}
J.~Li, T.~Chen, S.~Luo, C.~Cheng, J.~Guan, R.~Guo, S.~Wang, G.~Liu, J.~Peng, and J.~Ma, ``Hotspot-driven peptide design via multi-fragment autoregressive extension,'' \emph{arXiv preprint arXiv:2411.18463}, 2024.

\bibitem{pallatom}
W.~Qu, J.~Guan, R.~Ma, K.~Zhai, W.~Wu, and H.~Wang, ``P (all-atom) is unlocking new path for protein design,'' \emph{bioRxiv}, pp. 2024--08, 2024.

\bibitem{LogitAdj}
A.~K. Menon, S.~Jayasumana, A.~S. Rawat, H.~Jain, A.~Veit, and S.~Kumar, ``Long-tail learning via logit adjustment,'' \emph{arXiv preprint arXiv:2007.07314}, 2020.

\bibitem{seesaw}
J.~Wang, W.~Zhang, Y.~Zang, Y.~Cao, J.~Pang, T.~Gong, K.~Chen, Z.~Liu, C.~C. Loy, and D.~Lin, ``Seesaw loss for long-tailed instance segmentation,'' in \emph{Proceedings of the IEEE/CVF conference on computer vision and pattern recognition}, 2021, pp. 9695--9704.

\bibitem{oversampling}
J.~Peng, X.~Bu, M.~Sun, Z.~Zhang, T.~Tan, and J.~Yan, ``Large-scale object detection in the wild from imbalanced multi-labels,'' in \emph{Proceedings of the IEEE/CVF conference on computer vision and pattern recognition}, 2020, pp. 9709--9718.

\bibitem{FSOD}
X.~Wang, T.~Huang, J.~Gonzalez, T.~Darrell, and F.~Yu, ``Frustratingly simple few-shot object detection,'' in \emph{International Conference on Machine Learning}.\hskip 1em plus 0.5em minus 0.4em\relax PMLR, 2020, pp. 9919--9928.

\bibitem{undersampling}
S.~Sinha, H.~Ohashi, and K.~Nakamura, ``Class-wise difficulty-balanced loss for solving class-imbalance,'' in \emph{Proceedings of the Asian conference on computer vision}, 2020.

\bibitem{TIMED-Design-undersampling}
L.~V. Castorina, S.~M. {\"U}nal, K.~Subr, and C.~W. Wood, ``Timed-design: flexible and accessible protein sequence design with convolutional neural networks,'' \emph{Protein Engineering, Design and Selection}, vol.~37, p. gzae002, 2024.

\bibitem{CBDM}
Y.~Qin, H.~Zheng, J.~Yao, M.~Zhou, and Y.~Zhang, ``Class-balancing diffusion models,'' in \emph{Proceedings of the IEEE/CVF Conference on Computer Vision and Pattern Recognition}, 2023, pp. 18\,434--18\,443.

\bibitem{DiffROP}
D.~Yan, L.~Qi, V.~T. Hu, M.-H. Yang, and M.~Tang, ``Training class-imbalanced diffusion model via overlap optimization,'' \emph{arXiv preprint arXiv:2402.10821}, 2024.

\bibitem{ContrastiveDiffusion}
Y.~Wang, X.~Liu, F.~Huang, Z.~Xiong, and W.~Zhang, ``A multi-modal contrastive diffusion model for therapeutic peptide generation,'' in \emph{Proceedings of the AAAI Conference on Artificial Intelligence}, vol.~38, no.~1, 2024, pp. 3--11.

\bibitem{BLV}
Y.~Wang, J.~Fei, H.~Wang, W.~Li, T.~Bao, L.~Wu, R.~Zhao, and Y.~Shen, ``Balancing logit variation for long-tailed semantic segmentation,'' in \emph{Proceedings of the IEEE/CVF Conference on Computer Vision and Pattern Recognition}, 2023, pp. 19\,561--19\,573.

\bibitem{ESM}
A.~Rives, J.~Meier, T.~Sercu, S.~Goyal, Z.~Lin, J.~Liu, D.~Guo, M.~Ott, C.~L. Zitnick, J.~Ma \emph{et~al.}, ``Biological structure and function emerge from scaling unsupervised learning to 250 million protein sequences,'' \emph{Proceedings of the National Academy of Sciences}, vol. 118, no.~15, p. e2016239118, 2021.

\bibitem{self-supervised}
K.~K. Yang, N.~Zanichelli, and H.~Yeh, ``Masked inverse folding with sequence transfer for protein representation learning,'' \emph{Protein Engineering, Design and Selection}, vol.~36, p. gzad015, 2023.

\bibitem{unigem}
S.~Feng, Y.~Ni, Y.~Lu, Z.-M. Ma, W.-Y. Ma, and Y.~Lan, ``Unigem: A unified approach to generation and property prediction for molecules,'' \emph{arXiv preprint arXiv:2410.10516}, 2024.

\bibitem{AF2}
J.~Jumper, R.~Evans, A.~Pritzel, T.~Green, M.~Figurnov, O.~Ronneberger, K.~Tunyasuvunakool, R.~Bates, A.~{\v{Z}}{\'\i}dek, A.~Potapenko \emph{et~al.}, ``Highly accurate protein structure prediction with alphafold,'' \emph{Nature}, vol. 596, no. 7873, pp. 583--589, 2021.

\bibitem{mmseqs2}
M.~Steinegger and J.~S{\"o}ding, ``Mmseqs2 enables sensitive protein sequence searching for the analysis of massive data sets,'' \emph{Nature biotechnology}, vol.~35, no.~11, pp. 1026--1028, 2017.

\bibitem{rosetta}
R.~F. Alford, A.~Leaver-Fay, J.~R. Jeliazkov, M.~J. O’Meara, F.~P. DiMaio, H.~Park, M.~V. Shapovalov, P.~D. Renfrew, V.~K. Mulligan, K.~Kappel \emph{et~al.}, ``The rosetta all-atom energy function for macromolecular modeling and design,'' \emph{Journal of chemical theory and computation}, vol.~13, no.~6, pp. 3031--3048, 2017.

\bibitem{Chai-1}
\BIBentryALTinterwordspacing
{Chai Discovery}, ``Chai-1: Decoding the molecular interactions of life,'' \emph{bioRxiv}, 2024. [Online]. Available: \url{https://www.biorxiv.org/content/early/2024/10/11/2024.10.10.615955}
\BIBentrySTDinterwordspacing

\bibitem{tm-score}
Y.~Zhang and J.~Skolnick, ``Tm-align: a protein structure alignment algorithm based on the tm-score,'' \emph{Nucleic acids research}, vol.~33, no.~7, pp. 2302--2309, 2005.

\bibitem{ESMFold}
Z.~Lin, H.~Akin, R.~Rao, B.~Hie, Z.~Zhu, W.~Lu, N.~Smetanin, R.~Verkuil, O.~Kabeli, Y.~Shmueli \emph{et~al.}, ``Evolutionary-scale prediction of atomic-level protein structure with a language model,'' \emph{Science}, vol. 379, no. 6637, pp. 1123--1130, 2023.

\bibitem{Gaussian}
S.~Park, ``On the distribution functions of ratios involving gaussian random variables,'' \emph{ETRI journal}, vol.~32, no.~6, pp. 965--968, 2010.

\bibitem{RF}
M.~Baek, F.~DiMaio, I.~Anishchenko, J.~Dauparas, S.~Ovchinnikov, G.~R. Lee, J.~Wang, Q.~Cong, L.~N. Kinch, R.~D. Schaeffer \emph{et~al.}, ``Accurate prediction of protein structures and interactions using a three-track neural network,'' \emph{Science}, vol. 373, no. 6557, pp. 871--876, 2021.

\end{thebibliography}
}

\newpage
\appendix

\section{Gradient Analysis of Residue-wise Logit Modification} \label{sec.AnalysisRFGM}

For the standard used Cross-Entropy (CE) loss, we have:
\begin{equation*}
\mathcal{L}_{\text{CE}}(\mathbf{z}) = -\sum^{C}_{i=1}y_{i}\text{log}(p_{i}), \ \text{with} \ p_{i}=\frac{e^{z_{i}}}{\sum^{C}_{j=1}e^{z_{j}}} \in (0,1)
\end{equation*}
where $p_{i}$ is the $i$-th class probability calculated by softmax; ground truth label $y_{i} \in \{0,1\}$; $C$ represents the number of classes. Then, we can derive gradients of a training sample w.r.t. positive logit $z_i$ and negative logits $z_j$:
\begin{equation*}
\frac{\partial \mathcal{L}_{\text{CE}}(\textbf{z})}{\partial z_{i}} = p_{i} - 1 \ \ \in (-1,0)
\label{ori_pos_gra}
\end{equation*}

\begin{equation*}
\frac{\partial \mathcal{L}_{\text{CE}}(\textbf{z})}{\partial z_{j}} = p_{j} \ \ \in (0,1)
\label{ori_neg_gra}
\end{equation*}

According to the above equations, we can observe that the gradient is positively correlated to its class probability, \ie $\frac{\partial \mathcal{L}_{\text{CE}}(\textbf{z})}{\partial z_{k}} \propto p_{k}, \ k \in \{ i, j \}$. In other words, for any class $k$, we can analyze the trend of $\frac{\partial \mathcal{L}_{\text{CE}}(\textbf{z})}{\partial z_{k}}$ through $p_{k}$, regardless of whether it is a positive or negative label. Through optimization like gradient descent, the CE loss would increase the logit of the ground truth class and penalize that of fake labels in the meantime. We will analyze the mechanism of RFGM (decreasing the gradient penalty for tail classes) below.

\begin{lemma}
Let $k$ be any class. Suppose a non-negative perturbation $\epsilon \geq 0$ is \textbf{only} added to its corresponding logit, resulting in ${z}_{k}' = z_{k} + \epsilon$, while all other logits remain unchanged. Then, the gradient of the cross-entropy loss with respect to $z_k$ satisfies
\[
\frac{\partial \mathcal{L}_{\text{CE}}(\textbf{z}')}{\partial z_{k}} \geq \frac{\partial \mathcal{L}_{\text{CE}}(\textbf{z})}{\partial z_{k}},
\]
where $\textbf{z}'$ denotes the updated logit vector after perturbation.
\end{lemma}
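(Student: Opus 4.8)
The plan is to collapse the whole statement onto a single fact: the softmax probability of a class is monotonically non-decreasing in its own logit. The starting point is the gradient identity already established above, namely $\frac{\partial \mathcal{L}_{\text{CE}}(\mathbf{z})}{\partial z_k} = p_k - y_k$. Because the perturbation modifies only $z_k$ and never the one-hot target $\mathbf{y}$, the indicator $y_k$ is identical for $\mathbf{z}$ and $\mathbf{z}'$. Hence the desired inequality $\frac{\partial \mathcal{L}_{\text{CE}}(\mathbf{z}')}{\partial z_k} \geq \frac{\partial \mathcal{L}_{\text{CE}}(\mathbf{z})}{\partial z_k}$ is equivalent to $p_k' \geq p_k$, where $p_k'$ denotes the softmax probability of class $k$ recomputed from $\mathbf{z}'$. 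This reduction is clean precisely because $y_k$ cancels, so it holds uniformly whether $k$ is a positive label ($y_k = 1$) or a negative one ($y_k = 0$).

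Next I would make the dependence on $z_k$ explicit. Writing $S = \sum_{j \neq k} e^{z_j}$ for the sum over the unmodified logits, we have $p_k = e^{z_k}/(e^{z_k} + S)$, and $S$ is invariant under the perturbation. Viewing $p_k$ as a univariate function of $z_k$, a direct differentiation gives
\begin{equation*}
\frac{\partial p_k}{\partial z_k} = \frac{e^{z_k} S}{(e^{z_k} + S)^2} = p_k(1 - p_k) \geq 0,
\end{equation*}
since $p_k \in (0,1)$ and $S > 0$. Thus $p_k$ is non-decreasing in $z_k$. Because $\epsilon \geq 0$ implies $z_k' = z_k + \epsilon \geq z_k$, monotonicity yields $p_k' \geq p_k$, and substituting back through $p_k - y_k$ gives the claim, with equality exactly when $\epsilon = 0$.

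There is no substantive obstacle in the proof itself — it is essentially a one-line monotonicity argument once $y_k$ is seen to be frozen. The single point deserving care is the bookkeeping that $S$ is genuinely unchanged, which is what turns $p_k$ into a univariate increasing function and fixes the sign of the derivative unambiguously; if instead multiple logits were perturbed at once, the normalization coupling would obscure this sign. The real conceptual work lies downstream in the main text rather than here: the lemma shows only that enlarging a logit raises its own gradient, so the argument that RFGM \emph{relieves} the penalty on tail classes must combine this lemma with the fact that RFGM assigns the larger perturbations $\frac{\max \mathbf{v}}{v_i}$ to head classes, whose inflated probabilities then drain mass away from the tail via the softmax normalization.
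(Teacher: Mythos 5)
Your proof is correct and rests on the same reduction as the paper's: with all other logits frozen, the gradient inequality is equivalent to $p_k' \geq p_k$, i.e., to the monotonicity of the softmax probability in its own logit. The only methodological difference is how that monotonicity is verified. The paper argues algebraically, computing the ratio $\frac{\partial \mathcal{L}_{\text{CE}}(\mathbf{z}')/\partial z_k}{\partial \mathcal{L}_{\text{CE}}(\mathbf{z})/\partial z_k} - 1 = \frac{(e^{\epsilon}-1)\sum_{m\neq k} e^{z_m}}{\sum_{m\neq k} e^{z_m} + e^{z_k+\epsilon}} \geq 0$, whereas you differentiate, using $\partial p_k/\partial z_k = p_k(1-p_k) \geq 0$ with $S = \sum_{j\neq k} e^{z_j}$ held fixed, and then invoke $z_k' \geq z_k$; these are interchangeable one-line checks. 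One place where your write-up is actually tighter than the paper's: the paper relies on the shorthand $\partial \mathcal{L}_{\text{CE}}/\partial z_k \propto p_k$ ``regardless of whether it is a positive or negative label,'' which is not literal proportionality when $y_k = 1$ (there the gradient is $p_k - 1 < 0$, so a naive ratio comparison would need care); your observation that $y_k$ simply cancels in the difference, making the gradient inequality equivalent to $p_k' \geq p_k$ for either label, disposes of this cleanly. Your closing remark on the lemma's role downstream also matches how the paper uses it: Lemma 1 and its $-\epsilon$ counterpart serve exactly to convert a comparison of perturbed softmax probabilities into a sign condition on the net perturbation difference between head and tail classes.
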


\begin{proof}
First, we have $\frac{\partial \mathcal{L}_{\text{CE}}(\textbf{z})}{\partial z_{k}} \propto p_{k} = \frac{e^{z_{k}}}{\sum_{m \neq k} e^{z_{m}} + e^{z_{k}}}$ and $\frac{\partial \mathcal{L}_{\text{CE}}(\textbf{z}')}{\partial z_{k}} \propto p_{k}' = \frac{e^{(z_{k} + \epsilon)}}{\sum_{m \neq k} e^{z_{m}} + e^{(z_{k} + \epsilon)}}$. To prove $\frac{\partial \mathcal{L}_{\text{CE}}(\textbf{z}')}{\partial z_{k}} \geq \frac{\partial \mathcal{L}_{\text{CE}}(\textbf{z})}{\partial z_{k}}$, it is equal to prove $\frac{\partial \mathcal{L}_{\text{CE}}(\textbf{z}') /  \partial z_{k}}{\partial \mathcal{L}_{\text{CE}}(\textbf{z}) /  \partial z_{k}} - 1 \geq 0$:
\begin{align*}
\frac{\partial \mathcal{L}_{\text{CE}}(\textbf{z}') /  \partial z_{k}}{\partial \mathcal{L}_{\text{CE}}(\textbf{z}) /  \partial z_{k}} - 1 &= \frac{e^{\epsilon}(\sum_{m \neq k} e^{z_{m}} + e^{z_{k}}) - (\sum_{m \neq k} e^{z_{m}} + e^{(z_{k}+\epsilon)})}{\sum_{m \neq k} e^{z_{m}} + e^{(z_{k}+\epsilon)}} \\ &= \frac{(e^{\epsilon}-1)\sum_{m \neq k}e^{z_{m}}}{\sum_{m \neq k} e^{z_{m}} + e^{(z_{k}+\epsilon)}} \\ & \geq 0 \ (\because \epsilon \geq 0, \ \text{then} \  e^{\epsilon} \geq 1)
\end{align*}
\end{proof}

Note that when ${z}_{k}' = z_{k} - \epsilon, \ \epsilon \geq 0$, we can easily prove $\frac{\partial \mathcal{L}_{\text{CE}}(\textbf{z}')}{\partial z_{k}} \leq \frac{\partial \mathcal{L}_{\text{CE}}(\textbf{z})}{\partial z_{k}}$ by converting $\epsilon$ to $-\epsilon$ in the above proof. Now, we are about to demonstrate the gradient effect of RFGM, starting from two-class situation then generalizing to multi-class one.

\textbf{Two-class situation.} Assume that we have a more frequent and a less frequent class logits $z_L$ and $z_S$. Based on  the definition of RFGM in Eq.~\ref{eq.RFGM}, we add noise on them with different scales. For a simple yet accurate representation, these two modified logits are $\hat{z}_{k}=z_{k} + \lambda_{k} |\delta_{\sigma}^{k}|, i \in \{L, S\}$ and $\lambda_{L} \geq \lambda_{S}$. Recall that $\delta_{\sigma}^{k} \sim \mathcal{N}(0,\sigma)$ and  $\sigma$ is a hyperparameter which is identical for all classes. To prove the rare class suffers less gradient penalty after applied RFGM, it is equal to demonstrate $\frac{\partial \mathcal{L}_{\text{CE}}(\textbf{z}')}{\partial z_{S}} \leq \frac{\partial \mathcal{L}_{\text{CE}}(\textbf{z})}{\partial z_{S}}$. Because of the uncertainty introduced by noise $\delta_{\sigma}^{k}$, we can only prove it from the view of probability:

\begin{align*}
&P(\frac{\partial \mathcal{L}_{\text{CE}}(\textbf{z}')}{\partial z_{S}} \leq \frac{\partial \mathcal{L}_{\text{CE}}(\textbf{z})}{\partial z_{S}}) \\
&= P(p_{S}' \leq p_{S})\quad&(\frac{\partial\mathcal{L}_{\text{CE}}(\textbf{z})}{\partial z_{k}} \propto p_{k})\\
&=P(\frac{e^{(z_{S}+\lambda_{S}|\delta_{\sigma}^{S}|)}}{e^{(z_{S}+\lambda_{S}|\delta_{\sigma}^{S}|)}+e^{(z_{L}+\lambda_{L} |\delta_{\sigma}^{L}|)}} \leq \frac{e^{z_{S}}}{e^{z_{S}} + e^{z_{L}}})\\
&=P(\frac{e^{(z_{S} + (\lambda_{S} |\delta_{\sigma}^{S}|-\lambda_{L}|\delta_{\sigma}^{L}|))}}{e^{(z_{S} + (\lambda_{S} |\delta_{\sigma}^{S}|-\lambda_{L}|\delta_{\sigma}^{L}|))}+e^{z_{L}}} \leq \frac{e^{z_{L}}}{e^{z_{S}} + e^{z_{L}}})\\
&=P(\lambda_{S}|\delta_{\sigma}^{S}|-\lambda_{L}|\delta_{\sigma}^{L}| \leq 0)\quad&(\textbf{Lemma 1})\\
&=P(\frac{|\delta_{\sigma}^{S}|}{|\delta_{\sigma}^{L}|} \leq \frac{\lambda_{L}}{\lambda_{S}})\\
&=P(\frac{|V|}{|U|} \leq \frac{\lambda_{L}}{\lambda_{S}})\quad&(U=\frac{\delta_{\sigma}^{L}}{\sigma}\sim\mathcal{N}(0,1),V=\frac{\delta_{\sigma}^{S}}{\sigma}\sim\mathcal{N}(0, 1))\\
&=P(\text{arctan}(\frac{|V|}{|U|}) \leq \text{arctan}(\frac{\lambda_{L}}{\lambda_{S}})) \quad &(\text{arctan}(\cdot)\text{ is monotonically increasing})\\
&= \frac{2}{\pi}\text{arctan}(\frac{\lambda_{L}}{\lambda_{S}})\quad&\text{arctan}(\frac{|V|}{|U|})\sim U(0,\frac{\pi}{2})\text{\cite{Gaussian}}\\
& \geq \frac{1}{2} \quad&\frac{\lambda_{L}}{\lambda_{S}} \geq 1
\end{align*}

Thus, there is a greater than 0.5 probability that the gradient magnitude for the rare class decreases, indicating that the rare class experiences reduced gradient penalty due to the introduction of RFGM. It's also easy to prove $P(\frac{\partial \mathcal{L}_{\text{CE}}(\textbf{z}')}{\partial z_{L}} \geq \frac{\partial \mathcal{L}_{\text{CE}}(\textbf{z})}{\partial z_{L}}) \geq \frac{1}{2}$ according to the above derivation, which show that the gradient magnitude of frequent class is inclined to increase.

\textbf{Multi-class situation. }Generalizing to multi-class means introducing more perturbations to the above analysis, which is nontrivial to prove from probability viewpoint. From the point of expectation, we set $\gamma_{k}=\mathbb{E}[\lambda_{k}|\delta_{\sigma}^{k}|]$ and $\gamma_{1} 
\geq \cdots \geq \gamma_{N}$. For the most frequent class, we can approximate that:

\begin{equation*}
    p_{1}'\approx \frac{e^{z_1+\gamma_1}}{\sum(e^{z_k+\gamma_k})} \geq \frac{e^{z_1+\gamma_1}}{\sum e^{z_k+\gamma_1}}=\frac{e^{z_1}}{\sum e^{z_k}}=p_1
\end{equation*}

Recall that $\frac{\partial\mathcal{L}_{\text{CE}}(\textbf{z})}{\partial z_{k}} \propto p_{k}$, the modified gradient of the most frequent class, thus, is likely greater than its original gradient, \ie $\frac{\partial \mathcal{L}_{\text{CE}}(\textbf{z}')}{\partial z_{1}} \geq \frac{\partial \mathcal{L}_{\text{CE}}(\textbf{z})}{\partial z_{1}}$. Similarly, for the least frequent class, we have $p_{N}'\approx\frac{e^{z_N+\gamma_N}}{\sum(e^{z_k+\gamma_k})} \leq \frac{e^{z_N+\gamma_N}}{\sum e^{z_k+\gamma_N}}=p_N$. As for the a class $m \in \{2,\cdots,N-1\}$, its ratio $\frac{p_{m}'}{p_m}$ satisfies:

\begin{equation*}
    \frac{p_{m}'}{p_m} \approx \frac{\frac{e^{z_m + \gamma_m}}{\sum(e^{z_k + \gamma_k})}}{\frac{e^{z_m}}{\sum e^{z_k}}} = \frac{e^{\gamma_m} \cdot \sum e^{z_k}}{\sum(e^{z_k + \gamma_k})} \in [\frac{e^{\gamma_N} \cdot \sum e^{z_k}}{\sum(e^{z_k + \gamma_k})}, \frac{e^{\gamma_1} \cdot \sum e^{z_k}}{\sum(e^{z_k + \gamma_k})}] = [\frac{p_{N}'}{p_N}, \frac{p_{1}'}{p_1}]
\end{equation*} 

According to the above inequality, larger $\gamma_m$(s), from more frequent classes, lead to higher ratios $\frac{p_{m}'}{p_m}$ than those from less frequent classes. Thus, due to RFGM, the gradient magnitudes of head classes tend to rise, while those of tail classes incline to decrease.

Therefore, in terms of gradient descent, we can proved that \textit{RFGM enlarges the gradient penalty on the negative classes and reduces the gradient reward on the positive ones for the head class and conversely for the tail class}, as summarized in Tab.~\ref{tab.AnalysisRFGM}.

\begin{table}[ht]
\centering
\caption{The impact of RFGM on gradient}
\label{tab:AnalysisRFGM}
\begin{tabular}{lcccc}
\toprule
Class & Trend & True label? & Gradient & Description of back-propagation \\
\midrule
\multirow{2}{*}{Head} & \multirow{2}{*}{$p_{m}' \uparrow$}
& $\checkmark$ & $p_{m}' - 1$ & Gradient reward $\downarrow$ \\
& & $\times$ & $p_{m}'$ & Gradient penalty $\uparrow$ \\
\midrule
\multirow{2}{*}{Tail} & \multirow{2}{*}{$p_{m}' \downarrow$}
& $\checkmark$ & $p_{m}' - 1$ & Gradient reward $\uparrow$ \\
& & $\times$ & $p_{m}'$ & \textcolor{darkgreen}{Gradient penalty $\downarrow$} \\
\bottomrule
\end{tabular}
\label{tab.AnalysisRFGM}
\end{table}

\section{Experimental Details} \label{sec.AppendixExp}

\subsection{\ourmodel~ Implementation}
\label{sec.Implementation}

\textbf{Architecture.} 
The \textbf{encoders} of \ourmodel~consist of two distinct components: one dedicated to processing the pocket and the other to handling the perturbed complex. Specifically, the pocket encoder integrates residue types, all-atom coordinates, and backbone dihedral angles into a 128-channel single embedding. Additionally, it incorporates pairwise sequential identities, relative sequential positions, distogram, and relative orientations to generate a 64-channel pair embedding. The complex encoder, on the other hand, takes as input the pocket single embeddings, perturbed residue types, time embeddings, and angle embeddings, and outputs a 128-channel single embedding.
These embeddings are then fed into the denoiser module, as shown in Fig.~\ref{fig.MainFig}a. The \textbf{denoiser} comprises six recycles, each consisting of an Invariant Point Attention (IPA) transformer \cite{AF2}, a backbone update block, and an edge transition block. The IPA transformer leverages three tracks of information to update the single and pair embeddings while maintaining their channel numbers. Subsequently, the updated single embedding is used to refresh the backbone geometry and pair embedding. After completing the six recycles, the final single embedding is utilized to model other variables. Note that when predicting torsion angles and side-chain offsets, we only utilize a sub-network of \ourmodel~ with shared parameters but shallower layers. Because we assume that the prediction task is much simpler than the generation task. So the framework of \ourmodel~in Fig.~\ref{fig.MainFig}(\textbf{d}) is shadowed partially.

\textbf{Training.}
The training process of \ourmodel~adheres to Algorithm~\ref{alg.train}.

\ourmodel~ and all its variants in Sec.~\ref{Exp.Ablation} and Appendix.~\ref{Exp.comparison_longtail} are trained using Distributed Data Parallel mode on 2 NVIDIA A100 GPUs. Each model is trained for 320k steps with a learning rate of $1\times10^{-4}$. To monitor the performance of intermediate models, we generate approximately 200 peptides, randomly selected from the validation set, every 5k training step using 100 sampling timesteps. The validation loss is defined as $(\text{RMSD} - 10 \times \text{AAR})$. A plateau scheduler is employed with a decay factor of 0.8 and a patience of 10. We only choose the final checkpoint for the final test. The hyperparameter $\tau$ of IAW, \ie Eq.~\ref{eq.IAW}, is set to 4.5 \AA.

\begin{algorithm}[tb]
   \caption{Training process of \ourmodel}
   \label{alg.train}
\begin{algorithmic}[1]
   \STATE {\bfseries Input:} Dataset $\mathcal{D}$ with pre-computed weights (IAWs, Eq.~\ref{eq.IAW})
   \REPEAT
   \STATE Sample a complex $\{ \mathcal{I}^{\text{pep}},\mathcal{I}^{\text{poc}} \}$ from $\mathcal{D}$
   \STATE Perturb the peptide with a random time step, \\ $\mathcal{I}^{\text{pep}}_{t}=\{a^{j}_t, x^{j}_t, R^{j}_t \}^{n}_{j=1}, \ t \sim U(0,1)$
   \STATE Encode the fixed protein \\ $\mathbf{r}, \mathbf{p} = \mathbf{Encoder} (\mathcal{I}^{\text{poc}})$ \\ \# $\mathbf{r}$: single embedding, $\mathbf{p}$: pair embedding
   \STATE Predict the three modalities of peptide \\ $\{\hat{z}^{j}, \hat{x}^{j}, \hat{R}^{j} \}^{n}_{j=1} = \mathbf{Denoiser}(\mathbf{r}, \mathbf{p}, \mathcal{I}^{\text{pep}}_{t})$ 
   \IF{$t>0.75$}
        \STATE Predict the residue types and offsets \\ $\{\hat{\chi}^{j}, \Delta^j_{\text{SC}} \}^{n}_{j=1} = \mathbf{Denoiser}'(\mathbf{r}, \mathbf{p}, \{a^{j}, \hat{x}^{j}, \hat{R}^{j}\}^{n}_{j=1})$ \\ $a^j$: Ground-truth residue type
        \STATE Build the all-atom structure \\ $\{\hat{X}^j\}^{n}_{j=1} = \mathbf{BUILD}(\{a^j, \hat{x}^{j},\hat{R}^{j}, \hat{\chi}^{j}\}^{n}_{j=1})$ \\ $\{X^j\}^{n}_{j=1} = \mathbf{BUILD}(\{a^j, \hat{x}^{j},\hat{R}^{j}, \chi^{j}\}^{n}_{j=1})$
        \STATE Calculate the PSP loss (Eq.~\ref{eq.psp})
   \ENDIF
   \STATE Apply RFGM on logits $z^{j}$ (Eq.~\ref{eq.RFGM})
   \STATE Calculate the total loss (Eq.~\ref{eq.TotalLoss}) weighted by IAWs
   \STATE Update the model via backpropagation
   \UNTIL{converged}
\end{algorithmic} 
\end{algorithm}

\textbf{Generation.} All the generations are applied on an NVIDIA V100 GPU with 200 sampling timesteps. Note that \ourmodel~ generates $C_{\alpha}$ translations, backbone rotations and residue types first, then predicts the torsion angles and side-chain offsets based on the above variables. For each case of the test set, we generate 16 peptides for model evaluation.

\subsection{Data splitting}
\label{sec.DataSplit}

In this work, the benchmark dataset derives from Q-BioLip \cite{Q-BioLip} and PepBDB \cite{PepBDB}, comprising 10,348 protein-peptide complexes in total. We first extract receptor sequence(s) for every complex. If there are more than one chains, we concatenate them into one sequence for the following clustering. After that, we utilize mmseq2 \cite{mmseqs2} to cluster all the sequences, with a sequential similarity of 0.4. According to the clustering result, we split train: val: test set with ratio 8: 1: 1 as candidate pools. Then, noncontinuous peptides with large gaps ($>$ 3 a.a.) and over-similar complexes are removed for a high-quality test set. Specifically, we randomly select five samples from any cluster containing more than five elements. As a result, this procedure yields 8228:1065:219 complexes across the three datasets, with the test set referred to as the \textbf{General} test set. In order to investigate NSAAs and their impact on affinity in detail, we also define an \textbf{NSAA} test set from the \textbf{General} set, ensuring that each peptide contains at least one NSAA.
In the main manuscript we focus on the three most frequent NSAAs, each appears at least 200 times, and find that adding just these NSAAs already improves affinity substantially. In Appendix~\ref{sec.Generalization}, we further expand the set to 18 types, each appearing at least 30 times, and the \ourmodel~ still exhibits strong generalization ability.

Here are the details about sequential clustering for multi-chain receptor:
First, receptor sequences are extracted from all complex cases, with the number of receptor chains varying across cases. Next, all receptor chains are clustered using MMseqs2 with a sequence identity threshold of 0.4. Each case is then assigned cluster labels based on its receptor chains: single-chain cases receives one label, while multi-chain cases are assigned multiple labels. Cases are grouped into the same cluster only if their label sets are identical. For example, if case A has a single-chain receptor labeled (1), and cases B, C, D have two-chain receptors labeled (2, 2), (1, 2), and (2, 2), respectively, only cases B and D would belong to the same cluster. Notably, NSAAs in receptor sequences are replaced with \textit{UNK} before clustering. Since their occurrence is rare, this substitution has negligible impact on the clustering results.

\subsection{Baseline Details}
\label{sec.BaselineDetail}

\subsubsection{Protein (Peptide) Generation Models}

\textbf{RFdiffusion.} 
As a representative algorithm for \textit{de novo} protein design, RFdiffusion \cite{RFD} integrates a pretrained protein prediction model \cite{RF} into the diffusion process to generate backbone structures. In this study, we employ this approach to generate 16 peptide backbones for each pocket, using 200 sampling timesteps. Similar to \ourmodel, we provide only the protein pocket, rather than the full receptor, to RFdiffusion and designate all pocket residues as hotspot residues. Subsequently, ProteinMPNN \cite{ProteinMPNN} is applied to predict a sequence based on the given peptide backbone.

\textbf{ProteinGenerator.} 
To achieve consistency between sequence and backbone structure, ProteinGenerator \cite{ProteinGenerator}, also developed from RFdiffusion, performs diffusion in the sequence space guided by the structure. It then outputs proteins with compatible sequences and structures. Based on the official model parameters, we set all pocket residues as hotspot residues and generate 16 peptides per case to ensure fair comparison.

\textbf{PPFLOW.}
Given specific protein pockets, PPFLOW \cite{ppflow} generates peptide binders primarily by recovering their backbone dihedral angles through flow-matching. Following its official inference script and pretrained model, we utilize PPFLOW for peptide sequence-structure co-design with 500 sampling timesteps.

\textbf{DiffPepBuilder.}
DiffPepBuilder \cite{DPB}, a peptide sequence-structure co-design model, aims to generate potent and stable peptides targeting protein receptors. To achieve this, it incorporates co-evolution embeddings calculated by ESM2 \cite{ESM} and builds disulfide bonds. Notably, when generating peptides with this approach, we discard the disulfide-bond building function and set their length to match that of their native counterparts.

\textbf{PepGLAD.} Targeting a specific protein pocket, PepGLAD \cite{pepglad} first extracts structural features of the pocket, generates corresponding peptides in a learned latent space, and then decodes the resulting peptide–protein complex into all-atom 3D structures. We evaluate it on our test set using the official configuration provided by the authors.

\textbf{PepFlow.}
PepFlow \cite{pepflow} is a multi-modal all-atom peptide generation model based on flow-matching, achieving relatively high quality in peptide reconstruction. Before evaluation, we train this model from scratch with the same settings as our models, while retaining its original architecture.

\subsubsection{Long-tailed Learning Methods} \label{sec.app.longtail}

\textbf{Weighted Sampling.}
Weighted sampling is an intuitive approach that assigns higher probabilities to less frequent classes during training. Specifically, we first calculate the proportion of each amino acid in the current training set. Then, we define the weight of each type as the reciprocal of its proportion. Based on the residue-wise weights, we compute the sampling weight for each peptide, normalized by its sequence length. Thus, we can sample from the training set according to the weighted data distribution, rather than a uniform distribution.

By increasing the occurrence frequency of rare amino acids, the weighted sampling scheme helps generative models to better learn from them. However, it should be noted that this method may deviate the data distribution from the native one.

\textbf{Balancing Logit Variation.}
By introducing category-wise and frequency-related perturbations on classification logits during the training process, Balancing Logit Variation (BLV) modifies the area of each point in the feature space. As a result, the feature points of tail classes expand to larger areas, while those from head classes experience only smaller expansions. Specifically, the logit modification method of BLV is similar to our RFGM (Eq.~\ref{eq.RFGM}), but the former adjusts perturbations using counter-proportional weights:
\begin{equation}
\hat{z}_{i}^{BLV} = z_{i} + \frac{v_{i}}{max \ \mathbf{v}} |\delta(\sigma)|, \ with \ v_i = log \frac{\sum^{C}_{j=1}n_{j}}{n_{i}}
\label{eq.BLV}
\end{equation}

\textbf{Seesaw Loss.}
Initially applied in instance segmentation, Seesaw Loss \cite{seesaw} aims to address the issue that tail-class objects are more likely to be ignored or recognized as head categories. To achieve this, the work modifies the gradients of positive and negative samples using a mitigation factor and a compensation factor. Specifically, the mitigation factor reduces the gradient penalty on tail classes, preventing them from being excessively suppressed. Meanwhile, the compensation factor adjusts the gradients of false-positive samples to restrain them.

\subsection{Metric Details}
\label{sec.MetricDetail}
Generative models are developed to capture real data distribution and produce diverse samples. To be specific, given protein targets, our work aims to generate peptide binders with several desired attributes, \eg, high affinity, compatible sequence and structure and so on. The generated peptides are evaluated from three aspects. \cite{pepflow} 

(i) Recovery. Recovery metrics assess the similarity between real and generated peptides, w.r.t. sequence, structure, secondary structure and binding site. Although the purpose of a generative model is not to reconstruct the native samples merely, resembling them indicates that the generated samples are likely to possess their features, like considerable affinity and chemical validity. 
\begin{itemize}
    \item Amino Acid Recovery (\textbf{AAR}) calculates the ratio of the same sequential identity between the ground truth and designed peptides. To investigate the AAR for interested amino acid types, we also introduce  \textbf{AAR(NS)}, which is for the total recovery of the NSAAs.
    \item The \textbf{RMSD} can be seen as a structural recovery metric which aligns the complex by \textit{pockets} first and measures the RMSD between $C_{\alpha}$ atoms of peptide binders. This metric not only compares the structural similarity between peptides themselves but also considers the relative position between peptide and protein pockets.
    \item Secondary-Structure Recovery (\textbf{SSR}) denotes the ratio of shared secondary structure, calculated by the DSSP package.
    \item Binding-Site Recovery (\textbf{BSR}) quantifies the extent to which the generated binder can cover the binding site of its native counterpart. Specifically, the binding site consists of protein residues located within a 10 \AA \ distance from the peptide residues. It's mentioned that lower BSR leads to higher RMSD, because the peptide is far away from the native position.
\end{itemize}

(ii) Energy. The generated samples are not required to strictly resemble the ground truth. Instead, the goal is to design more potent binders that can attach tightly to their targets. To achieve this, we use Rosetta \cite{rosetta} to calculate the binding affinity energy.
\begin{itemize}
    \item The \textbf{AFF} metric calculates the percentage of generated samples whose affinity energy is lower than that of the native binder. Specifically, for all evaluations, we use the same truncated receptors as protein pockets and run the Rosetta energy function once.
\end{itemize}

(iii) Design. From the perspective of design, we want to generate chemically compatible and diverse peptides.
\begin{itemize}
    \item The chemical compatibility refers to the consistency between sequence and structure. Given the generated sequence and the sequence of the protein target, we use Chai-1 \cite{Chai-1} to predict the structure of the peptide-protein complex. Then, \textbf{scRMSD} refers to self-consistency RMSD that compares the difference between generated and predicted structures in terms of $C_{\alpha}$ atoms \cite{FrameDiff}. Notably, Chai-1 can directly process the three selected NSAAs and perform inference in a relatively short time compared to similar approaches \cite{af3}. Considerations about scRMSD: when Chai-1 predicts complex structure, it couldn't designate specific bind sites. Especially for large receptors with hundreds of residues, a mismatch of bind sites would lead to considerably high scRMSD.
    \item The \textbf{Div} metric quantifies the structural diversity of peptides by calculating the average value of one minus the pairwise TM-score \cite{tm-score} across the 16 generated samples. A higher \textbf{Div} score reflects greater divergence among the peptide structures.
\end{itemize}

\subsection{Additional Results}
\label{sec.AdditionalResult}

\subsubsection{Visualization of Additional generated peptides}
Additional peptides targeting different proteins are visualized in Fig.~\ref{Fig.App.Visualization_v1}. In the case of 1BA8, we observe that \ourmodel~ prefers to generate PTR at the last position, replacing the native TYS. Notably, PTR and TYS share similar physicochemical properties—namely, negative charge and high polarity—suggesting that \ourmodel~ effectively learns and leverages these properties from the training data. Similarly, in case 1R1P, sample 1, \ourmodel~ tends to substitute D (Aspartic Acid) at the second position with SEP, which also exhibits comparable characteristics, such as mild acidity and hydrophilicity. These observations further support the conclusion that \ourmodel~ is capable of capturing and generalizing physicochemical similarities among residues during generation.

\begin{figure*}[h]
\begin{center}
\centerline{\includegraphics[width=\textwidth]{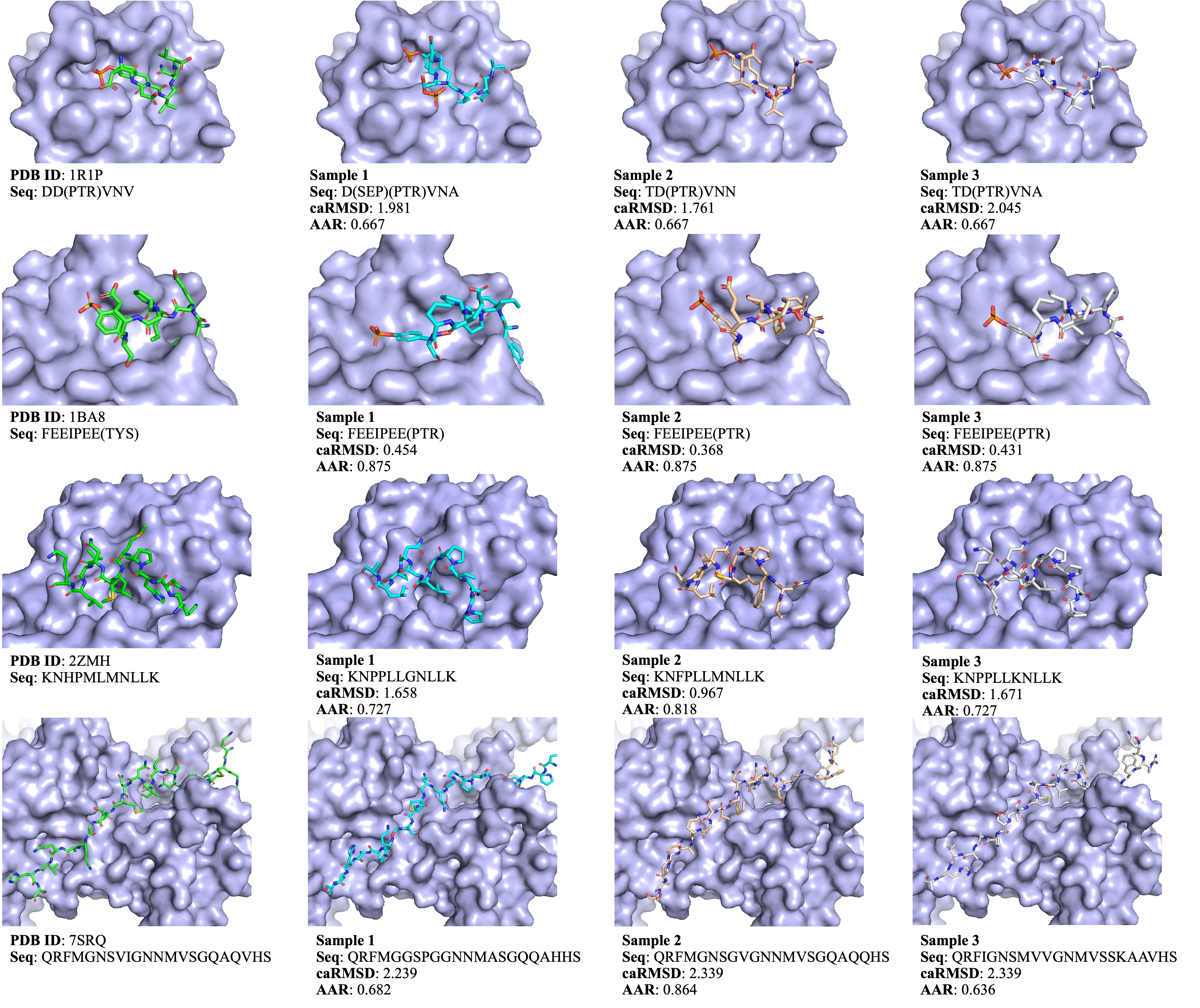}}
\caption{Additional peptides generated by \ourmodel}
\label{Fig.App.Visualization_v1}
\vspace{-6mm}
\end{center}
\end{figure*}

\subsubsection{Long-tailed Learning for NSAA Peptide Generation}
\label{Exp.comparison_longtail}

This task evaluates the effectiveness of various long-tailed learning approaches in capturing the NSAA distribution and investigates the impact of NSAAs on affinity. To emphasize the metrics influence of NSAAs, we conduct evaluations on the \textbf{NSAA} test set, which has a higher proportion of NSAAs compared to the \textbf{General} test set.

\begin{table}[ht]
\centering
\caption{Comparison experiments for long-tailed learning methods on the \textbf{NSAA} test set}
\resizebox{0.8\columnwidth}{!}{%
\begin{tabular}{lcccc}
\toprule
Method & AAR \% $\uparrow$ & AAR(NS) \% $\uparrow$ & RMSD \AA\ $\downarrow$ &  AFF \% $\uparrow$ \\
\midrule
PepFlow         & 30.80 & 0.00 & 3.04 & 13.54\\
PepFlow*    & 30.10 & 2.08 & 4.20 & 13.19\\
\midrule
PepFlow* + WS        & \underline{30.48} & \underline{4.76} & \underline{4.13} & 13.19\\
PepFlow* + BLV       & 28.93 & 2.08 & 4.40 & 15.63\\
PepFlow* + Seesaw    & 28.65 & 0.30 & 4.92 & \underline{16.32}\\
PepFlow* + RFGM       & \textbf{30.72} {\scriptsize  \textcolor{darkgreen}{\textbf{+0.24}}} & \textbf{19.05} {\scriptsize  \textcolor{darkgreen}{\textbf{+14.29}}} & \textbf{3.60} {\scriptsize  \textcolor{darkgreen}{\textbf{-0.53}}} & \textbf{26.04} {\scriptsize  \textcolor{darkgreen}{\textbf{+9.72}}} \\
\midrule
\ourmodel~ w/ None    & 31.19 & 10.12 & 5.69 & 12.28\\
\midrule
\ourmodel~ w/ WS    & 31.75 & 12.50 & 4.23 & 15.63\\
\ourmodel~ w/ BLV             & \textbf{32.39} & 25.00 & \textbf{2.88} & \underline{26.04}\\
\ourmodel~ w/ Seesaw          & 32.24 & \underline{27.08} & \underline{3.50} & 25.69\\
\ourmodel~ w/ RFGM      & \underline{32.36} {\scriptsize  \textcolor{darkred}{\textbf{-0.03}}} & \textbf{29.77} {\scriptsize  \textcolor{darkgreen}{\textbf{+2.69}}} & 3.94 {\scriptsize  \textcolor{darkred}{\textbf{+1.06}}} & \textbf{26.39} {\scriptsize  \textcolor{darkgreen}{\textbf{+0.35}}} \\
\bottomrule
\end{tabular}%
}
\vspace{-4mm}
\label{tab.comparison_longtail}
\end{table}

\textbf{Baselines.}
To prove the effectiveness of our long-tailed learning method, RFGM (Sec.~\ref{sec.RFGM}), we compare it with 
 (i) intuitive weighted sampling (WS), (ii) Balancing Logit Variation (BLV) \cite{BLV} and (iii) Seesaw loss \cite{seesaw} in the scenario of peptide generation with NSAAs. More details about these long-tailed learning approaches are presented in Appendix~\ref{sec.BaselineDetail}. Specifically, we use \textit{PepFlow} and \textit{PepFlow*} as our reference methods. The latter is a unified version of PepFlow capable of generating NSAAs. Then we compare these above long-tailed methods without and with side chain perception (Sec.~\ref{sec.SideChain}). For example, \textit{PepFlow* + WS} only utilizes the WS technique, while \textit{NS-Pep w/ WS} also adopts PSP and IAW. Moreover, \textit{NS-Pep w/ None} refers to PepFlow* equipped with side chain perception approaches, but without any long-tailed learning method.

\textbf{Metrics.} Focusing on the recovery of NSAAs and their impact on affinity, we use AAR,  \textbf{AAR(NS)}, RMSD, and AFF as evaluation metrics. In detail, AAR(NS) only assesses the recovery of NSAAs. Other metrics follow the same definitions in Sec.~\ref{Exp.Comparison}.

\textbf{Results.} As shown in Table~\ref{tab.comparison_longtail}, PepFlow* struggles to recover the NSAAs without long-tailed learning methods, only with 2.08\% recovery rate. The model achieves a higher AAR(NS) when equipped with WS or RFGM. Notably, RFGM alone improves this metric by 16.97\%, highlighting its effectiveness in residue-wise long-tailed learning. However, incorporating BLV or Seesaw does not yield any improvement in this regard.
Introducing side-chain perception techniques enhances all models in terms of both AAR and AAR(NS). Among these methods, RFGM remains the most effective for NSAA recovery and AFF, even when combined with side-chain techniques, underscoring its robustness. While Seesaw loss alone struggles with long-tailed distributions, it becomes highly effective when paired with side-chain methods, achieving an AAR(NS) of 27.08\%.
Comparing these performance trends, we observe that both a high AAR(NS) and a low RMSD contribute to a high AFF. Thus, although \ourmodel~ w/ None outperforms PepFlow* in NSAA recovery by a significant margin, it exhibits a lower AFF value, likely due to its weaker structural learning ability (\ie, higher RMSD).
Furthermore, while \ourmodel~ w/ RFGM falls behind BLV in both AAR and RMSD, it achieves a higher AFF, likely due to its superior NSAA learning capability combined with adequate structural recovery.

\subsubsection{Impact of Noise on the RFGM and BLV}
\label{sec.Ablation_noise}
RFGM and BLV mainly differs from the scales of noise. To investigate the impact of different components in RFGM and BLV, particularly the role of scaling, we conduct an ablation study on the General test set. Since both RFGM and BLV introduce noise to the predicted logits, we hypothesize that this noise—akin to a noising-denoising mechanism in self-supervised learning—enhances sequential learning. We analyze the effects by selectively removing the noise components and present the results in Table~\ref{tab.AblationNoise}. For better visualization, three key metrics are also plotted in Figure~\ref{Fig.ablation_noise}.

\begin{table}[ht]
\centering
\caption{Ablation study for noise of RFGM and BLV on the \textbf{General} test set}
\label{tab.AblationNoise}
\resizebox{\textwidth}{!}{%
\begin{tabular}{lcccccc}
\toprule
Method & AAR \% $\uparrow$ & AAR(SEP) \% $\uparrow$ & AAR(TYS) \% $\uparrow$ & AAR(PTR) \% $\uparrow$ & AAR(NS) \% $\uparrow$ & RMSD (\AA) $\downarrow$ \\
\midrule
\textbf{\ourmodel (RFGM)} & 22.53 & 0.57 & 64.29 & 56.25 & 29.77 & 4.92 \\
\ourmodel (RFGM w/o noise) & 15.81 & 7.39 & 41.07 & 64.58 & 26.79 & 5.82 \\
\ourmodel (BLV) & 20.99 & 1.14 & 61.61 & 27.08 & 25.00 & 5.22 \\
\ourmodel (BLV w/o noise) & 17.49 & 0.00 & 8.04 & 10.42 & 4.17 & 5.08 \\
\bottomrule
\end{tabular}%
}
\end{table}

\begin{figure*}[hbt]
\begin{center}
\centerline{\includegraphics[width=0.45\textwidth]{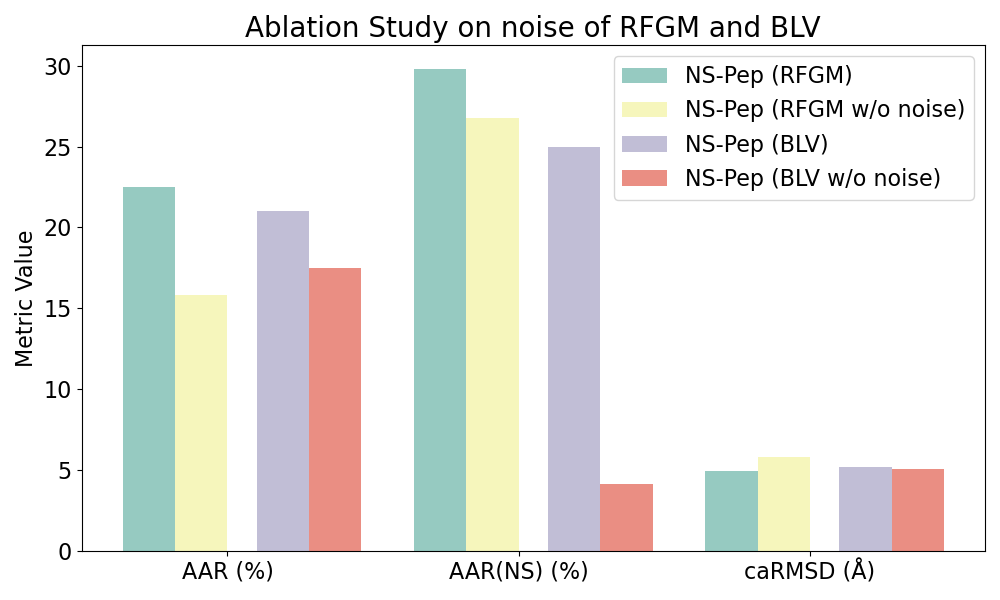}}
\caption{Ablation study on noise of RFGM and BLV}
\label{Fig.ablation_noise}
\vspace{-6mm}
\end{center}
\end{figure*}

The results clearly show that removing the noise leads to a notable drop in \textbf{AAR}, with both RFGM and BLV experiencing at least a 3.5\% decrease. This highlights the critical role of noise in promoting sequential predictions. Once the shared perturbation component is removed, only the scaling mechanisms in RFGM and BLV remain active. We further observe a performance drop in \textbf{AAR(NS)}, particularly for BLV, where the degradation is substantial. In contrast, RFGM shows minimal performance loss, demonstrating the effectiveness of its scaling strategy. Interestingly, despite these changes, \textbf{RMSD} remains largely unaffected, suggesting that structural accuracy is less sensitive to these modifications.

\subsubsection{Additional Interpretability Experiment} \label{sec.App.Explain}
To illustrate how \ourmodel~ determines the placement of NSAAs, we take 1BMB (PDB ID) as an additional example, focusing on residue 4 of the peptide chain I. This residue, PTR, extends deep into the binding pocket and interacts with residue 86 (ARG, R) in chain A. At this site, \ourmodel~ achieves a 75\% amino acid recovery (AAR).
(i) During inference, the attention matrix from the IPA block reveals that residue 86 pays the most attention to residues 4 and 5 of the peptide, indicating that \ourmodel~ effectively identifies key interaction sites.
(ii) In the fifth generated peptide, although PTR is not ranked highest, \ourmodel~ instead assigns high probabilities to structurally and chemically related amino acids. The top-5 predictions are GLU (E), LYS (K), ARG (R), TYR (Y), and PTR. Among them, GLU shares similar polarity and hydrophilicity with PTR; LYS and ARG possess long side chains that can reach into the pocket like PTR; and TYR is the canonical precursor of PTR. These predictions suggest that \ourmodel~ captures both the geometric context and relevant physicochemical properties of the binding site. Additional evidence is provided by the first two cases in Fig.~\ref{Fig.App.Visualization_v1}, which further support this observation.

\begin{figure*}[hbt]
\begin{center}
\centerline{\includegraphics[width=\textwidth]{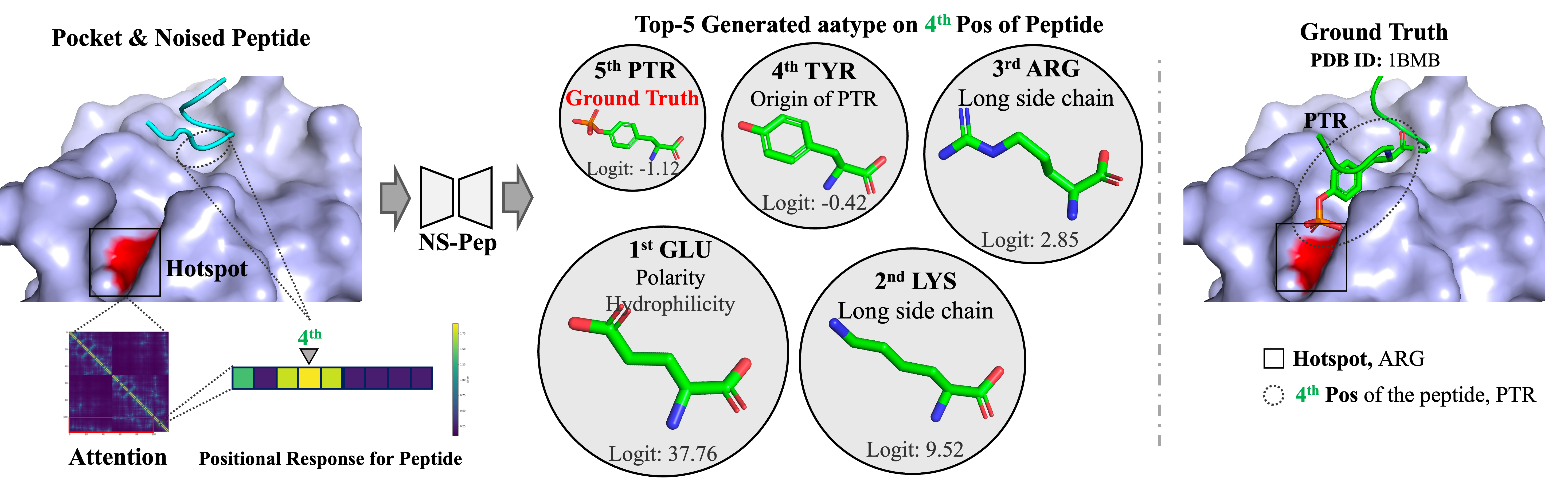}}
\caption{Interpretation of \ourmodel~ generation for 1BMB (PDB ID)}
\label{Fig.explain_app}
\vspace{-6mm}
\end{center}
\end{figure*}

\subsubsection{Hyperparameter Choice Experiments} \label{sec.app.hyperparam}
In this section, we conduct experiments to explore the choice of hyperparameters (Tab.~\ref{tab.hyperparam}), including our \ourmodel~ model and other long-tailed learning approaches. The evaluation metrics are defined identically to those in Sec.~\ref{sec.exp}. 

\textbf{Hyperparameter explanation.} The experiments for selecting hyperparameters involve four types of hyperparameters: (i) \textit{SC} refers to the weight $\lambda_{\text{SC}}$ for our PSP loss (Eq.~\ref{eq.psp}), serving to regulate the intensity of supervision derived from side-chain geometry; (ii) \textit{RFGM} indicates the standard deviation $\sigma$ in Eq.~\ref{eq.RFGM} that indirectly modulates the magnitude of gradients for both head and tail classes; (iii) \textit{BLV} shares a similar interpretation as \textit{RFGM} but pertaining specifically to Eq.\ref{eq.BLV}; (iv) \textit{Seesaw} comprises two weights for the mitigation factor and compensation factor, respectively.

\textbf{Results.} 
Obviously, we see that PepFlow* struggles to improve AAR(NS) merely by tuning the BLV or Seesaw hyperparameters; neither method substantially addresses the residue‐wise long‐tailed distribution.

Turning to NS‐Pep with different hyperparameters, increasing the RFGM parameter $\sigma$ often leads to higher AAR, and when $\lambda_{\text{SC}}$ is kept in $\{ 0.01, 0.05, 0.1 \}$, also improves AFF. On the flip side, a larger $\lambda_{\text{SC}}$ tends to worsen RMSD but increase design diversity—a clear trade‐off between backbone fidelity and exploration. Notably, $\lambda_{\text{SC}}=0.1$ and $\sigma=15$ seem to strike the best balance among AAR, AAR(NS), and AFF.

Although the BLV method does not remedy the long‐tail issue (as shown by relatively flat and low AAR(NS)), it does boost AAR, indicating its potential for classification‐oriented tasks. Moreover, a larger $\sigma$ in BLV appears to raise SSR, suggesting some benefit for secondary structure recovery.

Finally, with the Seesaw loss, a higher mitigation factor reduces AAR, while increasing the compensation factor can recover or improve AAR but at the cost of a higher RMSD. Thus, each hyperparameter choice reflects a nuanced compromise among structure recovery, sequence recovery, and design diversity.

\begin{table*}[htb]
\centering
\caption{Hyperparameter choice experiments on the \textbf{General} test set}
\resizebox{\textwidth}{!}{%
\begin{tabular}{lccccccc}
\toprule
\multirow{2}{*}{Method} & \multicolumn{5}{c}{Recovery} & \multicolumn{1}{c}{Energy} & \multicolumn{1}{c}{Design} \\
\cmidrule(r){2-6} \cmidrule(r){7-7} \cmidrule(r){8-8}
 & AAR \% $\uparrow$ & AAR(NS) \% $\uparrow$  & RMSD \AA\ $\downarrow$ & SSR \% $\uparrow$ & BSR \% $\uparrow$ & AFF \% $\uparrow$ & Diversity $\uparrow$ \\
\midrule
PepFlow & 16.30 & 0.00 & 4.98 & 85.80 & 81.50 & 11.60 & 0.50 \\
PepFlow* & 15.90 & 2.08 & 4.94 & 83.40 & 84.36 & 11.39 & 0.51 \\
PepFlow* + RFGM & 20.51 & 19.05 & 4.99 & 86.05 & 81.78 & 13.49 & 0.50 \\
\midrule
\ourmodel(SC 0.01, RFGM 10) & 21.99 & 7.74 & 5.41 & 85.37 & 81.55 & 14.45 & 0.50 \\
\ourmodel(SC 0.01, RFGM 15) & 20.96 & 13.69 & 4.99 & 84.92 & 80.68 & 14.58 & 0.51 \\
\ourmodel(SC 0.01, RFGM 20) & \underline{22.18} & 27.38 & 5.35 & 83.64 & 81.77 & \underline{15.25} & 0.52 \\
\ourmodel(SC 0.05, RFGM 10) & 20.14 & 14.88 & \textbf{4.79} & 86.00 & 81.93 & 13.64 & 0.48 \\
\ourmodel(SC 0.05, RFGM 15) & 21.00 & 2.98 & 5.39 & 86.10 & \underline{82.27} & 13.74 & 0.51 \\
\ourmodel(SC 0.05, RFGM 20) & 20.95 & 16.07 & 5.20 & 85.50 & 81.08 & 14.48 & \textbf{0.53} \\
\ourmodel(SC 0.1, RFGM 10) & 20.27 & 28.57 & 5.49 & 85.63 & \textbf{82.31} & 13.64 & 0.50 \\
\textbf{\ourmodel(SC 0.1, RFGM 15)} & \textbf{22.53} & \textbf{29.77} & \underline{4.92} & 85.00 & 81.62 & \textbf{17.23} & 0.49 \\
\ourmodel(SC 0.1, RFGM 20) & 21.79 & 18.15 & 5.37 & \underline{86.41} & 82.16 & 14.23 & 0.52 \\
\ourmodel(SC 0.15, RFGM 10) & 20.13 & 23.51 & 6.17 & 84.83 & 81.83 & 13.89 & 0.52 \\
\ourmodel(SC 0.15, RFGM 15) & 17.96 & 6.85 & 6.23 & \textbf{86.45} & 81.30 & 13.40 & 0.51 \\
\ourmodel(SC 0.15, RFGM 20) & 19.73 & 25.30 & 5.74 & 85.71 & 79.77 & 12.81 & \textbf{0.53} \\
\ourmodel(SC 0.2, RFGM 20) & 20.85 & \textbf{29.77} & 6.11 & 82.73 & 80.76 & 12.78 & \textbf{0.53} \\
\midrule
PepFlow* + BLV 5 & 17.6 & 0.89 & \underline{4.95} & 85.07 & 82.71 & 13.87 & \underline{0.48} \\
PepFlow* + BLV 10 & \underline{21.2} & 0.89 & 5.19 & 85.90 & \textbf{84.30} & 12.41 & \underline{0.48} \\
PepFlow* + BLV 15 & \textbf{21.9} & \underline{2.08} & \textbf{4.86} & \underline{85.96} & \underline{84.13} & \textbf{16.31} & 0.47 \\
PepFlow* + BLV 20 & 20.56 & \textbf{2.38} & 5.15 & \textbf{87.04} & 82.67 & \underline{13.90} & \textbf{0.51} \\
\midrule
PepFlow* + Seesaw(0.2, 1.0) & \underline{15.96} & \underline{2.38} & \textbf{4.79} & \textbf{86.12} & 83.44 & \textbf{13.99} & \underline{0.48} \\
PepFlow* + Seesaw(0.2, 2.0) & \textbf{16.01} & \underline{2.38} & 5.33 & 83.79 & \textbf{84.00} & 13.46 & 0.46 \\
PepFlow* + Seesaw(0.8, 1.0) & 14.48 & \textbf{2.68} & \underline{5.00} & 85.00 & 83.22 & 9.62 & \textbf{0.49} \\
PepFlow* + Seesaw(0.8, 2.0) & 15.10 & 0.30 & 5.09 & \underline{85.97} & \underline{83.85} & \underline{13.96} & \underline{0.48} \\

\bottomrule
\end{tabular}%
}
\vspace{-2mm}
\label{tab.hyperparam}
\end{table*}

\subsubsection{Generalization ability of NS-Pep
} \label{sec.Generalization}

To further prove the scalability of NS-Pep, we include additional NSAAs that appear more than 30 times in the dataset, resulting in a total of 38(=20+18) amino acid types. They are: SEP (O-Phosphoserine), TYS (O-Sulfotyrosine), PTR (O-Phosphotyrosine), MLE (N-Methyl-L-glutamic acid), M3L (N-Trimethyl-L-lysine), ALY (N6-Acetyl-L-lysine), TPO (O-Phosphothreonine), DAL (D-Alanine), HYP (4-Hydroxyproline), MVA (N-Methylvaline), DLY (D-Lysine), DLE(D-Leucine
), SAR (Sarcosine), NLE (Norleucine), BMT ($\beta$-Methylthiotyrosine), DGL (D-Glutamic acid), DPR (D-Proline), DTR (D-Tryptophan). Since NSAAs in the validation set are not used for evaluation, we reassign the cases comprising them to the training and test sets while keeping the dataset sizes unchanged, i.e., 8228:1065:219. Apart from this adjustment in data usage, all other experimental settings remain the same. Tab.~\ref{tab.NS18_comparison} and Fig.~\ref{Fig.4.AAR_NS18} summarize the performance of PepFlow* and NS-Pep under this setting. This comparison clearly shows that \ourmodel~ outperforms PepFlow* in recovering all NSAAs, highlighting its effectiveness in long-tailed learning.

\begin{table}[ht]
\centering
\caption{Comparison of PepFlow* and \ourmodel~ on 18 types of NSAA}
\resizebox{1.0\columnwidth}{!}{%
\begin{tabular}{lcccccccc}
\toprule
Method & AAR \% $\uparrow$ & RMSD (\AA) $\downarrow$ & Diversity $\uparrow$ & AAR(NS) \% $\uparrow$ & AAR(SEP) & AAR(TYS) & AAR(PTR) & AAR(MLE) \\
\midrule
PepFlow* & 15.45 & 4.90 & 0.49 & 1.43 & 1.74 & 0.00 & 0.00 & 0.00 \\
\ourmodel  & 25.02 & 4.79 & 0.56 & 29.36 & 44.44 & 17.71 & 59.66 & 25.42 \\
\midrule
\multicolumn{9}{c}{\textit{Continued: AAR for other NSAAs}} \\
\midrule
Method & AAR(M3L) & AAR(ALY) & AAR(DAL) & AAR(HYP) & AAR(MVA) & AAR(DLY) & AAR(DLE) & AAR(SAR) \\
\midrule
PepFlow* & 0.00 & 0.57 & 11.54 & 0.00 & 3.13 & 0.00 & 0.00 & 1.56 \\
\ourmodel  & 8.59 & 32.10 & 14.90 & 0.00 & 37.50 & 63.02 & 13.75 & 45.31 \\
\midrule
\multicolumn{9}{c}{\textit{Continued: AAR for remaining NSAAs}} \\
\midrule
Method & AAR(NLE) & AAR(BMT) & AAR(DGL) & AAR(DPR) & AAR(DTR) & AAR(TPO) & - & - \\
\midrule
PepFlow* & 0.00 & 0.00 & 0.00 & 0.00 & 0.00 & 0.00 & - & - \\
\ourmodel  & 0.00 & 28.13 & 0.00 & 0.00 & 6.25 & 0.00 & - & - \\
\bottomrule
\end{tabular}
}
\label{tab.NS18_comparison}
\end{table}

\begin{figure*}[ht]
\begin{center}
\centerline{\includegraphics[width=\textwidth]{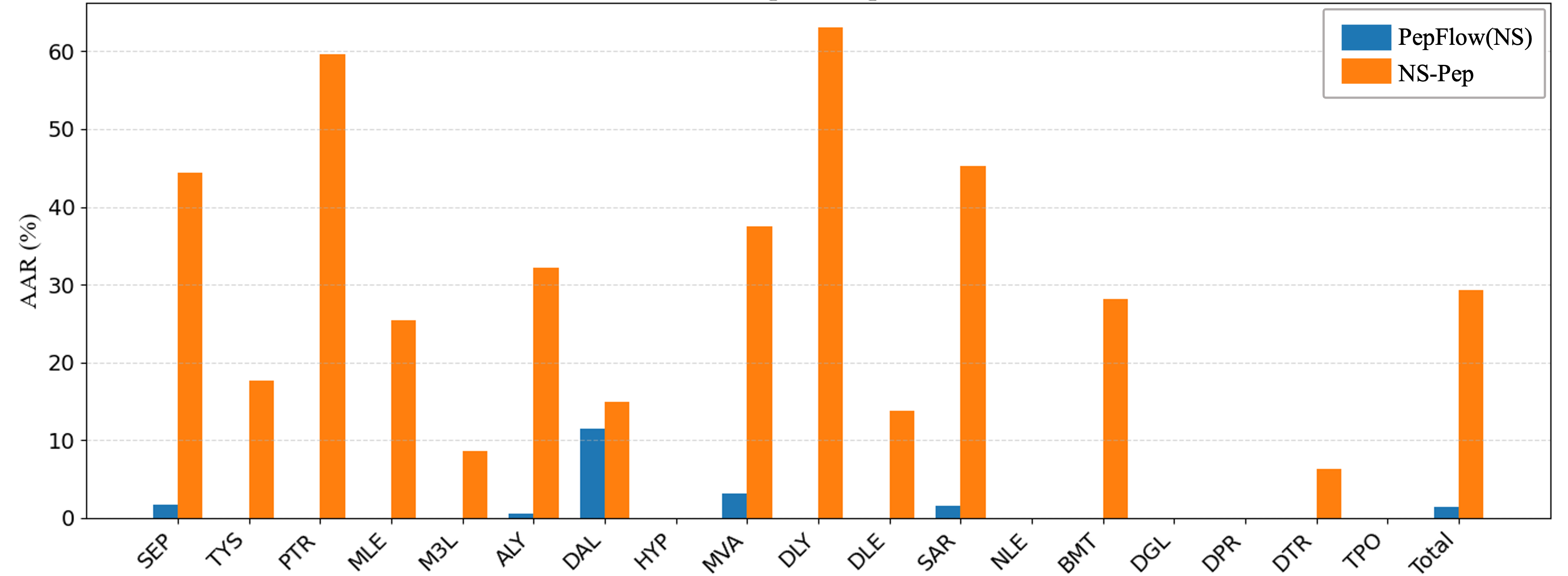}}
\caption{Amino acid recovery of \ourmodel~ and PepFlow on 18 types of NSAA}
\label{Fig.4.AAR_NS18}
\vspace{-8mm}
\end{center}
\end{figure*}

\section{Statistics of Side-chain Torsion Angle}
In our dataset, we perform a statistical analysis of side-chain torsion angles across 38 amino acid types and visualize their distributions in Fig.~\ref{Fig.torsion_dist}. For ALY, a non-standard amino acid with six rotatable side-chain bonds, we extend the number of considered $\chi$ angles to six. Additionally, to account for the CN atom (an N-terminal modification present in MLE, MVA, SAR, and BMT), we include its torsion angle in the $\chi$-angle system.

The resulting distributions reveal that it is difficult to accurately infer the identity of a residue based solely on its torsion angle(s). For instance, PHE, TYR, and TRP all possess two rotatable bonds and exhibit highly similar torsion angle distributions. While the similarity between PHE and TYR is chemically reasonable—differing mainly by a terminal oxygen atom—TRP also displays comparable torsion profiles despite having a markedly distinct side-chain structure. If a model is misled by these analogous distributions, it may generate incorrect residue predictions, leading to significant errors in physicochemical interpretation.

Moreover, we observe that most torsion angles tend to cluster within specific ranges, particularly for BMT. This suggests that, once a residue type is determined, predicting its torsion angles becomes relatively straightforward.

\begin{figure*}[h]
\begin{center}
\centerline{\includegraphics[width=\textwidth]{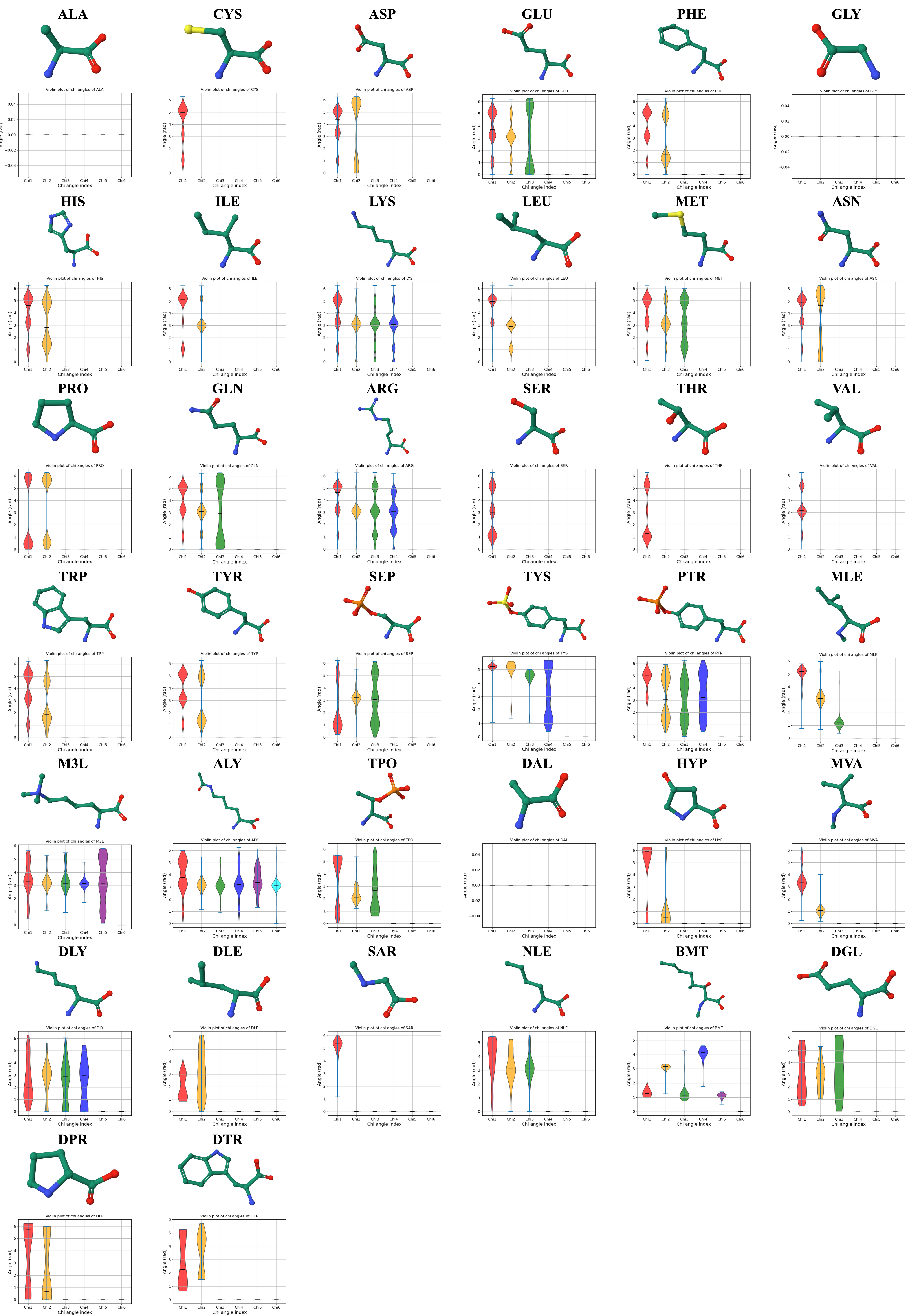}}
\caption{Distributions of torsion angles for 38 amino acid types}
\label{Fig.torsion_dist}
\end{center}
\end{figure*}


\end{document}